\documentclass{sl}

\usepackage{slsec}     %% PACKAGES for Studia Logica
\usepackage{stud_log} 
\usepackage{slfoot}
\usepackage{slthm}     %% Theorem-like definitions as in amsthm,
                       %% see amsthdoc.tex (dvi) and thmtest.tex
                       %% for usage (see later for definitions)

\usepackage{amssymb,amsmath}
\usepackage{graphicx,psfrag}

%\usepackage{ulem}

%%%%%%%%%%%%%%%%%%%%%%%%%%%%%%%%%%%%%%%%%%%%%%%%% 

\newcommand{\Ph}{\mathsf{Ph}}
\newcommand{\Q}{\mathsf{Q}}
\newcommand{\B}{\mathsf{B}}
\newcommand{\W}{\mathsf{W}}
\newcommand{\ev}{\mathsf{ev}}
\newcommand{\wl}{\mathsf{wl}}
\newcommand{\IOb}{\mathsf{IOb}}
\renewcommand{\time}{\mathsf{time}}
\newcommand{\dist}{\mathsf{dist}}
\newcommand{\leteq}{\,\mbox{$:=$}\,}
\newcommand{\then}{\rightarrow}
\newcommand{\Then}{\Longrightarrow}
\renewcommand{\iff}{\leftrightarrow}
\newcommand{\Iff}{\enskip \Longleftrightarrow\ }

\newcommand{\setclose}{\}}
\newcommand{\Setclose}{\,\right\}}
\newcommand{\setmid}{\,:\,}
\newcommand{\setopen}{\{}
\newcommand{\Setopen}{\left\{\,}
\renewcommand{\d}{\mathit{d}}
\newcommand{\convex}{\mathsf{Conv}}
\newcommand{\concave}{\mathsf{Conc}}
\newcommand{\Bw}{\mathsf{Bw}}
\newcommand{\mtwp}{\mathsf{meetTwP}}

\usepackage{color} 
\definecolor{thmcolor}{rgb}{0,0,.4} 
\definecolor{remarkcolor}{rgb}{0,.2,0} 
\definecolor{proofcolor}{rgb}{.4,0,0} 
\definecolor{quecolor}{rgb}{.2,.2,0} 
\definecolor{axcolor}{rgb}{.23,0,.23}
\definecolor{axbgcolor}{rgb}{1,.6,1} 
\definecolor{defbgcolor}{rgb}{0.9,0.9,0.6} 
\definecolor{thmbgcolor}{rgb}{0.8,0.8,1} 
\definecolor{rmbgcolor}{rgb}{0.7,1,0.7} 
\definecolor{proofbgcolor}{rgb}{1,0.7,0.7}

\newcommand{\ax}[1]{\textcolor{axcolor}{\ensuremath{\mathsf{#1}}}} 
\newcommand{\Ax}[1]{\textcolor{axcolor}{\colorbox{axbgcolor}{\ensuremath{\mathsf{#1}}}}} 
\newcommand{\df}[1]{{\bf #1}} 

%%%% \fboxsep sets the distance of the frame and the text %%%%%%%%%%%%%%%%%%%%%%%%%%%%%%%%%%%%%%
 
\newcommand{\Df}[1]{\setlength{\fboxsep}{2pt}\colorbox{defbgcolor}{\ensuremath{#1}}\setlength{\fboxsep}{3pt}} 
\newcommand{\Dff}[1]{\setlength{\fboxsep}{0pt}\colorbox{defbgcolor}{\ensuremath{#1}}\setlength{\fboxsep}{3pt}}

\theoremstyle{definition}  
\newtheorem{thm}{\bf Theorem}
\newtheorem{cor}{\bf Corollary}
\newtheorem{lem}{\bf Lemma}
\newtheorem{prop}{\bf Proposition}
\newtheorem{conv}{\bf Convention}
\newtheorem{que}{\bf Question}
\newtheorem{rem}{\bf Remark}

%%%%%%%%%%%%%%%%%%%%%%%%%%%%%%%
%%%%%%%%%%%%%%%%%%%%%%%%%%%%%%%

\begin{document}

%\titlerunning{Geom. Char. of the Twin Paradox and its Variants}
\title{A Geometrical Characterization of the Twin Paradox and its Variants}

\author{Gergely Sz\'ekely}
%\footnote{Alfr\'ed R\'enyi Institute of Mathematics, Hungarian Academy of Sciences,
% Re\'altanoda utca 13-15, H-1053, Budapest, Hungary}} 
%\email{turms@renyi.hu}}

\maketitle

\begin{abstract}
The aim of this paper is to provide a logic-based conceptual analysis
of the twin paradox (TwP) theorem within a first-order logic framework.  A
geometrical characterization of TwP and its variants is given.  It is
shown that TwP is not logically equivalent to the assumption of
the slowing down of moving clocks, and the lack of TwP is not logically
equivalent to the Newtonian assumption of absolute time.
The logical connection between TwP and a symmetry axiom of special
relativity is also studied.
\end{abstract}

\Keywords{twin paradox; geometrical characterization; logical
  foundations; axiomatization; special relativity}

%%%%%%%%%%%%%%%%%%%%%%%%%%%%%%%%%%%%%%%%%%
\section{Introduction}%
%%%%%%%%%%%%%%%%%%%%%%%%%%%%%%%%%%%%%%%%%

The twin paradox (TwP) theorem is one of the most famous predictions of
special relativity.  According to TwP, if a twin makes a journey into
space, he will return to find that he has aged less than
his twin brother who stayed at home.  However
surprising TwP is, it is not a contradiction. It is only a fact that
shows that the concept of time is not as simple as it seems to
be.\footnote{Unfortunately, it is still not uncommon for people
  who misinterpret the word `paradox' to try to find contradictions in
  relativity theory, that is why we think it important to note here
  that its original meaning is ``a statement that is seemingly
  contradictory and yet is actually true,'' that is, it has nothing to
  do with logical contradiction.  With the nearly century long
  fruitless debate in view, perhaps it would be better to call the
  paradoxes of relativity theory simply effects, thus saying ``twin
  effect'' instead of ``twin paradox,'' but for the time being it
  appears to be a hopeless effort to have this idea generally
  accepted.  Anyway, we would like to emphasize that it is absolutely
  pointless to try to find a logical contradiction in relativity
  theory, as its consistency has been proved, see \cite[Corollary 11.12,
    p.644]{logst}, \cite[p.77]{pezsgo}.}

A more optimistic consequence of TwP is the following. Suppose you
would like to visit a distant galaxy 200 light years away. You are
told it is impossible because even light travels there for 200
years. But you do not despair, you accelerate your spaceship nearly to
the speed of light. Then you travel there in 1 year of your
(proper) time. You study there whatever you wanted, and you come
back in 1 year. When you arrive back, you aged only 2 years. So you
are happy, but of course you cannot tell the story to your brother,
who stayed on Earth. Alas you can tell it to your
grand-\ldots-grand-children only.  In this way TwP also makes time
travel to the future possible.

In this paper we use the axiomatic method to provide a logic-based
conceptual analysis of the TwP theorem. We work within the first-order logic
(FOL) framework of \cite{pezsgo}, \cite{AMNsamples}, \cite{logst}.  We
logically compare TwP and a prediction (slowing down of moving clocks)
as well as a symmetry axiom of special relativity.  This analysis is
based on our geometrical characterization of TwP, see Theorem
\ref{thm-twp}.  We show that TwP is logically weaker than the
assumption of the slowing down of moving clocks, see Theorem
\ref{thm-slowtime}. We also show that TwP is logically weaker than a
symmetry axiom of special relativity, see Theorem \ref{thm-simdist}.
Since we prove our geometrical characterization in a general
kinematics setting, we can use it to derive consequences on Newtonian
kinematics too. We show that the absoluteness of time (in the
Newtonian sense) is not equivalent to the lack of the twin paradox
(No-TwP) without assuming a strong theoretical axiom, see Theorem
\ref{thm-univtime}.

Why is it useful to apply the axiomatic method to relativity theory?
For one thing, this method makes it possible  to understand the
role of any particular axiom. We can check what happens to our theory
if we drop, weaken or replace an axiom. For instance, it has been
shown by this method that the impossibility of faster than light
motion is not independent from other assumptions of special
relativity, see \cite[\S 3.4]{pezsgo}, \cite{AMNsamples}.  More
boldly: it is superfluous as an axiom because it is provable as a
theorem from much simpler and more convincing basic assumptions.  The
linearity of transformations between inertial observers (inertial reference frames)
can also be proven from some plausible assumptions, therefore it need
not be assumed as an axiom, see \cite{pezsgo},
\cite{AMNsamples}. 

The usual approaches to special theory of relativity base the theory
on two postulates, namely, Einstein's principle of relativity and that
the velocity of light is independent of its source.  Some authors give
a mathematical argument to prove that Einstein's principle of
relativity implies the second postulate, see, e.g., \cite{fock},
\cite{sfarti}. However, these approaches contain several tacit
assumptions besides the named postulates. So from the point of view of
axiomatic foundations of relativity theory, they are not explicit
enough. In an adequate axiomatic foundational work it is desirable to
state every assumption explicitly.\footnote{The logical formulation of
  Einstein's principle of relativity is not an easy task since it is
  difficult to capture axiomatically what ``the laws of nature''
  are. Therefore we will use a different approach here. For details on
  the axiomatic reformulation of Einstein's principle of relativity,
  see \cite{pezsgo}, \cite[\S 2.8.3]{Mphd}.}

Getting rid of unnecessary axioms of a physical theory is important
because we do not know whether an axiom is true or not, we just assume
so. We can only be sure of outcomes of concrete
experiments but they rather correspond to
(existentially quantified) theorems and not to axioms. In the
literature it is common to use the term ``empirical fact'' for
universal generalization of an empirical fact (elevated to the level
of axioms), see, e.g., \cite[\S 4]{GSz}, \cite{Szabo}. However, because of
their falsifiability it would be better to call them empirical axioms
(postulates based on outcomes of concrete experiments).

Similarly, if we axiomatize a theory, we can
ask which axioms are responsible for a certain prediction of the
theory. This kind of reverse thinking helps to answer the
why-type questions of relativity. For example, we can take the twin
paradox and check which axiom of special relativity was and which one
was not needed to derive it. The weaker an axiom system is, the better
answer it offers to the question: ``Why is the twin paradox
true?''. For details on answering why-type questions of
relativity by the methodology of the present work, see \cite{wqp}. For
further reasons why to apply the axiomatic method to spacetime
theories, see, e.g., \cite{pezsgo}, \cite{AMNsamples}, \cite{guts},
\cite{schutz}, \cite{suppes}.

Applying mathematical logic in foundations of relativity theories is
not a new idea at all. It goes back to such leading mathematicians and
philosophers as Hilbert, Reichenbach, Carnap, G{\"o}del, Tarski,
Suppes and Friedman, among others. 
The work of our school of Logic and
Relativity led by Andr{\'e}ka and N{\'e}meti is continuation to
their research.
In a spirit similar to ours, there is a large variety of works devoted
to logical axiomatizations of relativity, see, e.g., Ax~\cite{ax},
Benda~\cite{benda}, Goldblatt~\cite{goldblatt},
Mundy~\cite{mundy-oaomstg}, \cite{mundy-tpcomg},
Pambuccian~\cite{pambuccian}, Robb~\cite{Robb},
Suppes~\cite{suppes-sopitposat}, Schutz~\cite{schutz},
\cite{schutz-aasfmst}, \cite{Schu}.

Our general aims are to axiomatize relativity theories within pure FOL
using simple, comprehensible and transparent basic assumptions
(axioms) only; to prove the surprising predictions (theorems) of
relativity theories from a minimal number of convincing axioms; to
eliminate tacit assumptions from relativity by replacing them with
explicit axioms formulated in FOL (in the spirit of the FOL foundation
of mathematics and Tarski's axiomatization of geometry); and to
provide a foundation for relativity theory similar to that of
mathematics, cf.\ Hilbert's 6th problem \cite{corry}.  In our
perspective axiomatization is only a first step to logical and conceptual analysis
where the real fun begins.

For good reasons, the foundation of mathematics was performed strictly
within FOL.  A reason for this fact is that staying within
FOL helps to avoid tacit assumptions. Another reason is
that FOL has a complete inference system while
second-order logic (and thus any higher-order logic) cannot have
one. For further reasons why to stay within FOL when
dealing with axiomatic foundations, see, e.g., \cite[\S Appendix: Why
  FOL?]{pezsgo}, \cite{ax}, \cite{vaananen}, \cite{wolenski}.

%%%%%%%%%%%%%%%%%%%%%%%%%%%%%%%%%%%%%%%%%%%%%%%%%%%%%%
\section{A FOL axiom system of kinematics}%
%%%%%%%%%%%%%%%%%%%%%%%%%%%%%%%%%%%%%%%%%%%%%%%%%%%%%%
\label{ax-sec}

Here we explain our basic concepts. We deal with kinematics, i.e.,
with the motion of \textit{bodies} (anything which can move, e.g.,
test-particles, reference frames, electromagnetic waves or centers of
mass).  We represent motion as the changing of spatial location in
time.  Thus we use reference frames for coordinatizing events
(meetings of bodies).  {\it Quantities} are used for marking time and
space.  The structure of quantities is assumed to be an {\it ordered
  field} in place of the field of real numbers.\footnote{ Using
  ordered fields in place of the field of real numbers increases the
  flexibility of the theory and minimizes the amount of mathematical
  presuppositions.  For further motivation in this direction, see,
  e.g., Ax~\cite{ax}.  Similar remarks apply to our other
  flexibility-oriented decisions, e.g., to keep the dimension of
  spacetime as a variable.}  For simplicity, we associate reference
frames with special bodies which we call {\it observers}.\footnote{The
  body associated to a reference frame is nothing else than a label on
  the reference frame making it easier to talk about its motion.}
  Observations are formulated by means of the {\it worldview
    relation}.

There are several reasons for using observers (or coordinate systems,
or reference frames) instead of a single observer-independent
spacetime structure.  One is that it helps to weed out unnecessary
axioms from our theories.  Nevertheless, we state and emphasize the
logical equivalence\footnote{By logical equivalence, we mean
  definitional equivalence.} of observer-oriented and
observer-independent approaches to relativity theory, see, e.g.,
\cite[\S 4.5]{Mphd}.

Keeping the foregoing in mind, let us now set up the FOL language of
our axiom systems.  First we fix a natural number $\Df{d}\ge 2$ for
the dimension of spacetime.  We use a two-sorted language: $\Df{\B}$
is the sort of (potential) {\bf bodies} and $\Df{\Q}$ is the sort of {\bf
  quantities}.  Our language contains the following non-logical
symbols:
%\begin{itemize}
%\item 
unary relation symbol \Df{\IOb} (inertial \df{observers});
%\item 
binary function symbols \Df{+}, \Df{\cdot} and a binary relation
  symbol \Df{<} (the field operations and the ordering on $\Q$); and
%\item
 a $2+d$-ary relation symbol \Df{\W} (\df{worldview relation}).
%\end{itemize}

The variables of sort $\B$ are denoted by $m$, $k$, $a$, $b$ and $c$; 
and those of sort $\Q$ are denoted by $p$, $q$, $r$, $x$ and $y$.
$\IOb(m)$ is translated as ``$m$ is an (inertial) observer.''  We use the
worldview relation $\W$ to speak about coordinatization by translating
$\W(m,b,x_1,\ldots, x_d)$ as ``observer $m$ coordinatizes body $b$ at
spacetime location $\langle x_1,\ldots,x_d\rangle$,'' that is, at space
location $\langle x_2,\ldots,x_d\rangle$ at instant $x_1$.

Body terms are just the variables of sort $\B$.  Quantity terms are
the variables of sort $\Q$ and what can be built up from quantity
terms by using the field operations ($+,\cdot\,$).  $\IOb(m)$, $\W(m,b,x_1,\ldots,
x_d)$, $m=b$, $x_1=x_2$ and $x_1<x_2$ are the so-called {atomic
  formulas} of our FOL language, where $m,b,x_1,\dots,x_d$ can
be arbitrary terms of the required sorts.  The \df{formulas} of our
FOL language are built up from these atomic formulas by using
the logical connectives {\it not} (\Df{\lnot}), {\it and}
(\Df{\land}), {\it or} (\Df{\lor}), {\it implies}
(\Df{\rightarrow}), {\it if-and-only-if}
(\Df{\leftrightarrow}) and the quantifiers {\it exists} $x$
(\Df{\exists x}) and {\it for all $x$} (\Df{\forall x}) for every
variable $x$.  To abbreviate formulas of FOL we often
omit parentheses according to the following convention.  Quantifiers
bind as long as they can, and $\land$ binds stronger than
$\rightarrow$.  For example, $\forall x\enskip
\varphi\land\psi\rightarrow\exists y\enskip \delta\land\eta$ means
$\forall x\big((\varphi\land\psi)\rightarrow\exists
y(\delta\land\eta)\big)$.

We use first-order set theory as a meta theory to speak about model
theoretical terms, such as models, validity, etc.  The \df{models} of this
language are of the form
\begin{equation*}
\Df{\mathfrak{M}} = \langle \B, \Q;\IOb_\mathfrak{M},+_\mathfrak{M},\cdot_\mathfrak{M},<_\mathfrak{M},\W_\mathfrak{M}\rangle,
\end{equation*}
where $\B$ and $\Q$ are nonempty sets and $\IOb_\mathfrak{M}$ is a
unary relation on $\B$, $+_\mathfrak{M}$ and $\cdot_\mathfrak{M}$ are
binary functions and $<_\mathfrak{M}$ is a binary relation on $\Q$,
and $\W_\mathfrak{M}$ is a relation on $\B\times \B\times
\Q\times\dots\times \Q$.  Formulas are interpreted in $\mathfrak{M}$
in the usual way.

We formulate each axiom at two levels.  First we give an intuitive
formulation, then a precise formalization using our logical notation
(which can easily be translated into FOL formulas by
inserting the FOL definitions into the formalizations).
We seek to formulate easily understandable axioms in FOL.

We use the notation $\Df{\Q^n}\leteq\Q\times\ldots\times \Q$
($n$-times) for the set of all $n$-tuples of elements of $\Q$.  If
$p\in \Q^n$, we assume that $p=\langle p_1,\ldots,p_n\rangle$, i.e., $p_i\in\Q$ denotes the $i$-th component of the $n$-tuple $p$.
Specially, we write $\W(m,b,p)$ in place of $\W(m,b,p_1,\dots,p_d)$,
and we write $\forall p$ in place of $\forall p_1\dots\forall p_d$,
etc.  To abbreviate formulas, we also use bounded quantifiers in the
following way: $\forall x\, \varphi(x)\rightarrow \psi$ and $\exists x\,
\varphi(x)\land \psi$ are abbreviated to $\forall
x\in\varphi\enskip \psi$ and $\exists x\in\varphi\enskip \psi$,
respectively. For example, we write
\begin{equation*}\forall m\in \IOb\;\exists
b\in\B\;\exists p\in\Q^d\quad W(m,b,p)
\end{equation*}
instead of 
\begin{equation*}\forall
m\;\IOb(m)\rightarrow
 \exists b\;\B(b)\land  \exists p \enskip \Q(p_1)\land\ldots\land\Q(p_d)\land W(m,b,p)
\end{equation*}
to formulate that every observer observes a body somewhere.

To be able to add, multiply and compare measurements by observers, we
provide an algebraic structure for the set of quantities by our first axiom.
\begin{description}
\item[\Ax{AxEOF}]
The \df{quantity part} $\left< \Q;+,\cdot, < \right>$ is a Euclidean ordered field
(i.e., a linearly ordered field in which positive
elements have square roots).
\end{description}
For the FOL definition of linearly ordered field, see, e.g.,
\cite{chang-keisler}.  We use the usual field operations \Dff{0, 1, -,
  /,\sqrt{\phantom{i}}} definable within FOL.  We also use the
vector-space structure of $\Q^n$, i.e., if $p,q\in \Q^n$ and
$\lambda\in \Q$, then $\Df{p+q, -p, \lambda\cdot p}\in \Q^n$; the
\df{length} of $p\in \Q^n$ is defined as
\begin{equation*}
\Df{|p|}\leteq \sqrt{p_1^2+\ldots+p_n^2}
\end{equation*} 
for any $n\ge 1$, and $\Df{o}\,\leteq\langle 0,\ldots,0\rangle$
denotes the \df{origin}. The set of positive elements of $\Q$ (i.e.,
the set $\setopen x\in \Q:0<x\setclose$) is denoted by $\Df{\Q^+}$.

\begin{figure}[h!btp]
\small
\begin{center}
\psfrag{mm}[bl][bl]{$wl_m(m)$}
\psfrag{mk}[br][br]{$wl_m(k)$}
\psfrag{mb}[t][t]{$wl_m(b)$}
\psfrag{mph}[tl][tl]{$wl_m(ph)$}
\psfrag{kk}[br][br]{$wl_k(k)$}
\psfrag{km}[b][bl]{$wl_k(m)$}
\psfrag{kb}[t][t]{$wl_k(b)$}
\psfrag{kph}[tl][tl]{$wl_k(ph)$}
\psfrag{p}[r][r]{$p$}
\psfrag{k}[bl][bl]{$k$}
\psfrag{b}[tl][tl]{$b$}
\psfrag{m}[bl][bl]{$m$}
\psfrag{ph}[tl][tl]{$ph$}
\psfrag{evm}[tl][tl]{$\ev_m$}
\psfrag{evk}[bl][bl]{$\ev_k$}
\psfrag{Evm}[r][r]{$Ev_m$}
\psfrag{Evk}[r][r]{$Ev_k$}
\psfrag{Ev}[r][r]{$Ev$}
\psfrag{T}[tl][tl]{$\ev_m(p)=\ev_k(q)$}
\psfrag{q}[r][r]{$q$}
\psfrag{Cdk}[l][l]{$Cd_k$}
\psfrag{Cdm}[l][l]{$Cd_m$}
\psfrag{Crdk}[l][l]{$Crd_k$}
\psfrag{Crdm}[l][l]{$Crd_m$}
\psfrag{t}[lb][lb]{$$}
\psfrag{o}[t][t]{$o$}
\psfrag{fkm}[t][t]{$w^k_m$}
\psfrag*{text1}[cb][cb]{worldview of $k$}
\psfrag*{text2}[cb][cb]{worldview of $m$}
\includegraphics[keepaspectratio, width=0.82\textwidth]{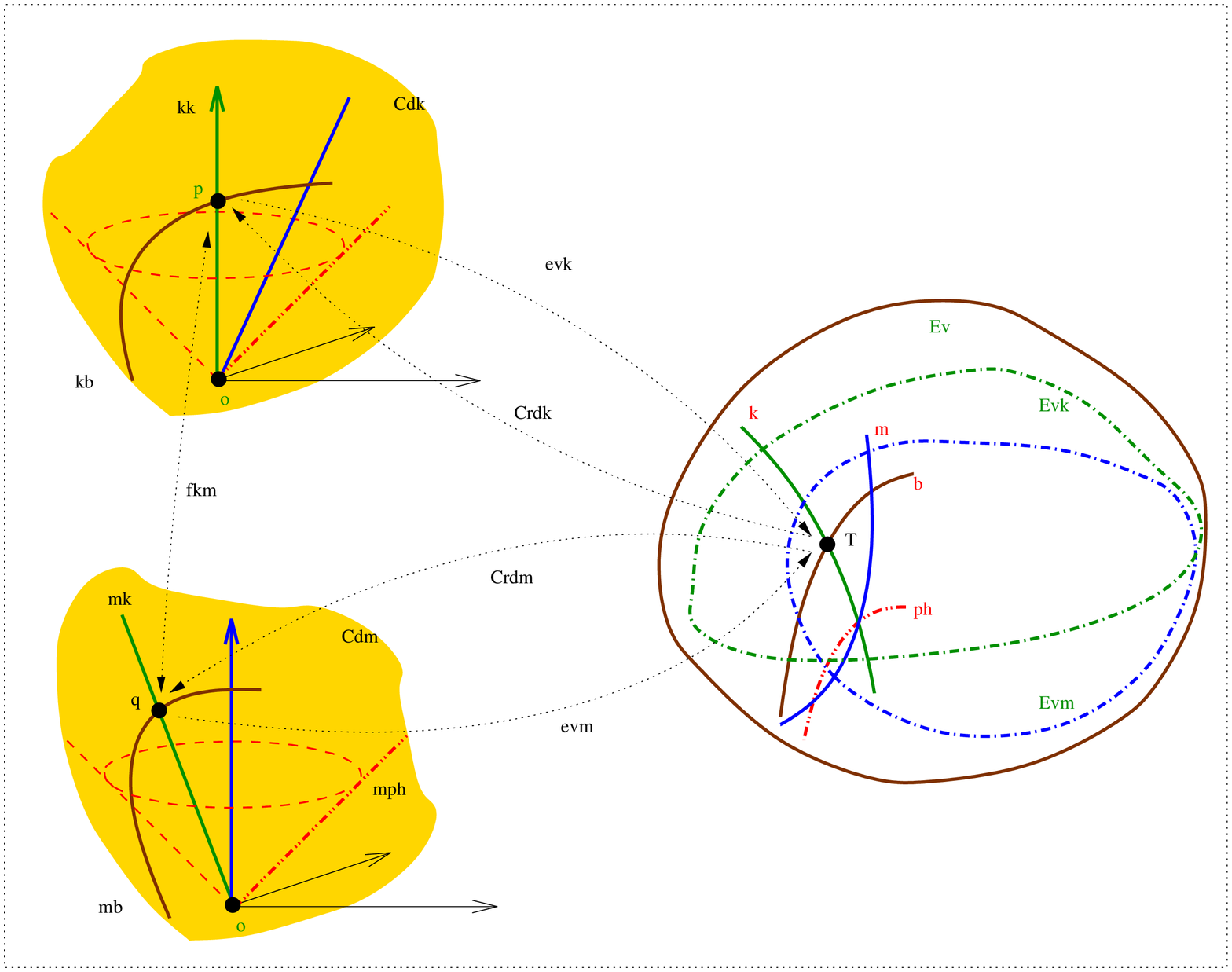}
\caption{\label{fig-fmk} Illustration of the basic definitions}
\end{center}
\end{figure}

We need some definitions and notations to formulate our other axioms.
The set $\Q^d$ is called the \df{coordinate system} and its elements are referred to as \df{
coordinate points}. 
We use the notations 
\begin{equation*}
\Df{p_\sigma}\leteq\langle p_2,\ldots, p_d\rangle \quad \text{ and }\quad \Df{p_\tau}\leteq p_1
\end{equation*} 
for the \df{space component} and the \df{time component} of $p\in\Q^d$, respectively.

Our first axiom on observers simply states that each observer thinks
that it is stationary in the origin of the space part of its
coordinate system.

\begin{description}
\item[\Ax{AxSelf}] An observer observes itself at a coordinate point
  iff the space component of this point is the origin:
\begin{equation*}
\forall m \in \IOb \enskip \forall p\in \Q^d\quad W(m,m,p) \iff p_\sigma=o.
\end{equation*}
\end{description}

The \df{event} (the set of bodies) observed by observer $m$ at
coordinate point $p$ is denoted by $\ev_m(p)$, i.e.,
\begin{equation*}
\Df{\ev_m(p)}\leteq\Setopen b\in \B \::\: \W(m,b,p)\Setclose,
\end{equation*}
and the \df{event-function} of $m$ is the function that maps coordinate point $p$ to event $\ev_m(p)$.
Let $Ev_m$ denote the set of nonempty events coordinatized by observer $m$, i.e.,
\begin{equation*}
\Df{Ev_m}\leteq\Setopen \ev_m(p) \::\: \ev_m(p)\neq\emptyset\Setclose,
\end{equation*}
and $Ev$ denote the set of all observed events, i.e.,
\begin{equation*}
\Df{Ev}\leteq\Setopen e\in Ev_m \::\: m\in \IOb
\Setclose.
\end{equation*}

Our next axiom states that the sets of events observed by any two observers are the same. 
\begin{description}
\item[\Ax{AxEv}] All observers coordinatize the same events:
\begin{equation*}
\forall m,k\in \IOb\enskip\forall p\in\Q^d\; \exists q\in\Q^d \quad \ev_m(p)=\ev_k(q).
\end{equation*}
\end{description}

We define the {\bf coordinate-function} of observer $m$, in symbols $Crd_m$, as the inverse of the event-function, i.e., 
\begin{equation*}
\Df{Crd_m}\leteq \ev_m^{-1},
\end{equation*}
where $R^{-1}\leteq\setopen\langle y,x\rangle : \langle x,y\rangle \in
R\setclose$ is the FOL definition of the {\bf inverse} of binary
relation $R$.  Let us note that by this definition,
coordinate-function $Crd_m$ may not be a function (in case $ev_m$ is
not one-to-one). It is only a binary relation.

\begin{conv}\label{crdconv}
Whenever we write $Crd_m(e)$, we mean that there is a unique $q \in
\Q^d$ such that $\ev_m(q)=e$, and this $q$ is denoted by $Crd_m(e)$.
That is, if we talk about the value $Crd_m(e)$, we postulate that it
exists and is unique.
\end{conv}

\noindent
The \df{time of event} $e$ according to observer $m$ is defined as 
\begin{equation*}
\Df{\time_m(e)}\leteq Crd_m(e)_\tau,
\end{equation*}
and the \df{elapsed time} between events $e_1$ and $e_2$ measured by
observer $m$ is defined as
\begin{equation*}
\Df{\time_m(e_1,e_2)}\leteq|\time_m(e_1)-\time_m(e_2)|;
\end{equation*}
$\time_m(e_1,e_2)$ is called the {\it proper time} measured by $m$ between $e_1$ and $e_2$ if $m\in e_1\cap e_2$.
Let us  note that whenever we write $\time_m$, we
assume that the events in its argument have unique coordinates by
Convention~\ref{crdconv}.

The \df{coordinate-domain} of observer $m$, in symbols $Cd_m$, is the set of coordinate points where $m$
observes something, i.e.,
\begin{equation*}
\Df{Cd_m}\leteq\{\, p\in\Q^d\::\: \ev_m(p)\neq\emptyset\,\}.
\end{equation*}
The {\bf worldview transformation} between the coordinate-domains of observers $k$ and $m$ is defined as
\begin{equation*}
\Df{w^k_m}\leteq\Setopen\langle q,p\rangle\in Cd_k\times Cd_m\::\:\ev_k(q)=\ev_m(p)\Setclose.
\end{equation*}
Let us note that worldview transformations are only binary relations by this definition.

\begin{conv}\label{wmkconv}
Whenever we write $w^k_m(q)$, we mean there is a unique $p\in \Q^d$ such that $\langle q,p\rangle \in w^k_m$, and this $p$ is denoted by $w^k_m(q)$.
\end{conv}

Let $\Df{1_t}\leteq\langle 1,0,\ldots,0\rangle$.
The {\bf time-unit vector} of $k$ according to $m$ is defined as
\begin{equation*}
\Df{1^k_m}\leteq w^k_m(1_t)-w^k_m(o).
\end{equation*}

The \df{world-line} of body $b$ according to observer $m$ is defined as
the set of coordinate points where $b$ was observed by $m$, i.e.,
\begin{equation*}
\Df{\wl_m(b)}\leteq\{\, p\in \Q^d \::\: W(m,b,p)\,\}.
\end{equation*}

\begin{description}
\item[\Ax{AxLinTime}] The world-lines of observers are lines and time is elapsing uniformly on them:
\begin{multline*}
\forall m,k \in \IOb\enskip \wl_m(k)=\{\, w^k_m(o)+\lambda\cdot1^k_m \::\: \lambda\in\Q\,\}\, \land \\
\forall p,q \in \wl_m(k) \quad \time_k\big(\ev_m(p),\ev_m(q)\big)\cdot\big|1^k_m\big|=|p-q|.
\end{multline*}
\end{description}
\noindent
Let us collect the axioms introduced so far in an axiom system:
\begin{equation*}
\boxed{\ax{Kinem_0}\leteq\Setopen \ax{AxEOF}, \ax{AxSelf}, \ax{AxLinTime}, \ax{AxEv} \Setclose}
\end{equation*}
Let us note that \ax{Kinem_0} is a general axiom system of kinematics 
 in which no relativistic effect is assumed.
\ax{Kinem_0} is a subtheory of Newtonian and relativistic kinematics.

%%%%%%%%%%%%%%%%%%%%%%%%%%%%%%%%%%%%%%%%%%%%%%%%%%%%%%%%%%%
\section{Geometrical Characterization of TwP}%
%%%%%%%%%%%%%%%%%%%%%%%%%%%%%%%%%%%%%%%%%%%%%%%%%%%%%%%%%%%
\label{thm-sec}

Since the axiom systems we use here deal only with inertial motions of
observers, we formulate the inertial version of TwP, which
is also called clock paradox in the literature.\footnote{This inertial version
  is the one that was formulated by Einstein in his famous 1905 paper,
  see \cite[\S 4]{einstein}.} Logical investigation of the
accelerated version of TwP needs a more complex mathematical
apparatus, see \cite{twp}, \cite[\S 4.3]{mythes}, \cite[\S 7]{myphd}.
We also formulate and characterize variants of TwP: one where the
stay-at-home twin will be the younger one (Anti-TwP) and another where
no differential aging will take place (No-TwP).

To formulate TwP, first we formulate the situations in which it can
occur.  We say that observer $m$ observes observers $a$, $b$ and $c$
in a {\bf twin paradox situation} at events $e$, $e_a$ and $e_c$ iff
$a\in e_a\cap e$, $b\in e_a\cap e_c$, $c\in e\cap e_c$, $b\not\in e$
and $\time_m(e_a)<\time_m(e)<\time_m(e_c)$ or
$\time_m(e_a)>\time_m(e)>\time_m(e_c)$, see Figure~\ref{twpp}.  This
situation is denoted by $\Df{\mtwp_m(\widehat{ac},b)}(e_a,e,e_c)$.

\begin{figure}[h!btp]
\begin{center}
\psfrag{m}{$m$}
\psfrag{a}{$a$}
\psfrag{b}{$b$}
\psfrag{c}{$c$}
\psfrag{p}{$p$}
\psfrag{s}{$s$}
\psfrag{q}{$q$}
\psfrag{0}{$o$}
\psfrag{r}[bl][bl]{$r$}
\psfrag{e}{$e$}
\psfrag{ea}{$e_a$}
\psfrag{ec}{$e_c$}
\psfrag{Ev}[tr][tr]{$Ev$}
\psfrag{Crdm}{$Crd_m$}
\psfrag{1b}[rb][rb]{${b}^\ddag$}
\psfrag{1a}[rb][rb]{${a}^\ddag$}
\psfrag{1c}[rb][rb]{${c}^\ddag$}
\includegraphics[keepaspectratio, width=0.8\textwidth]{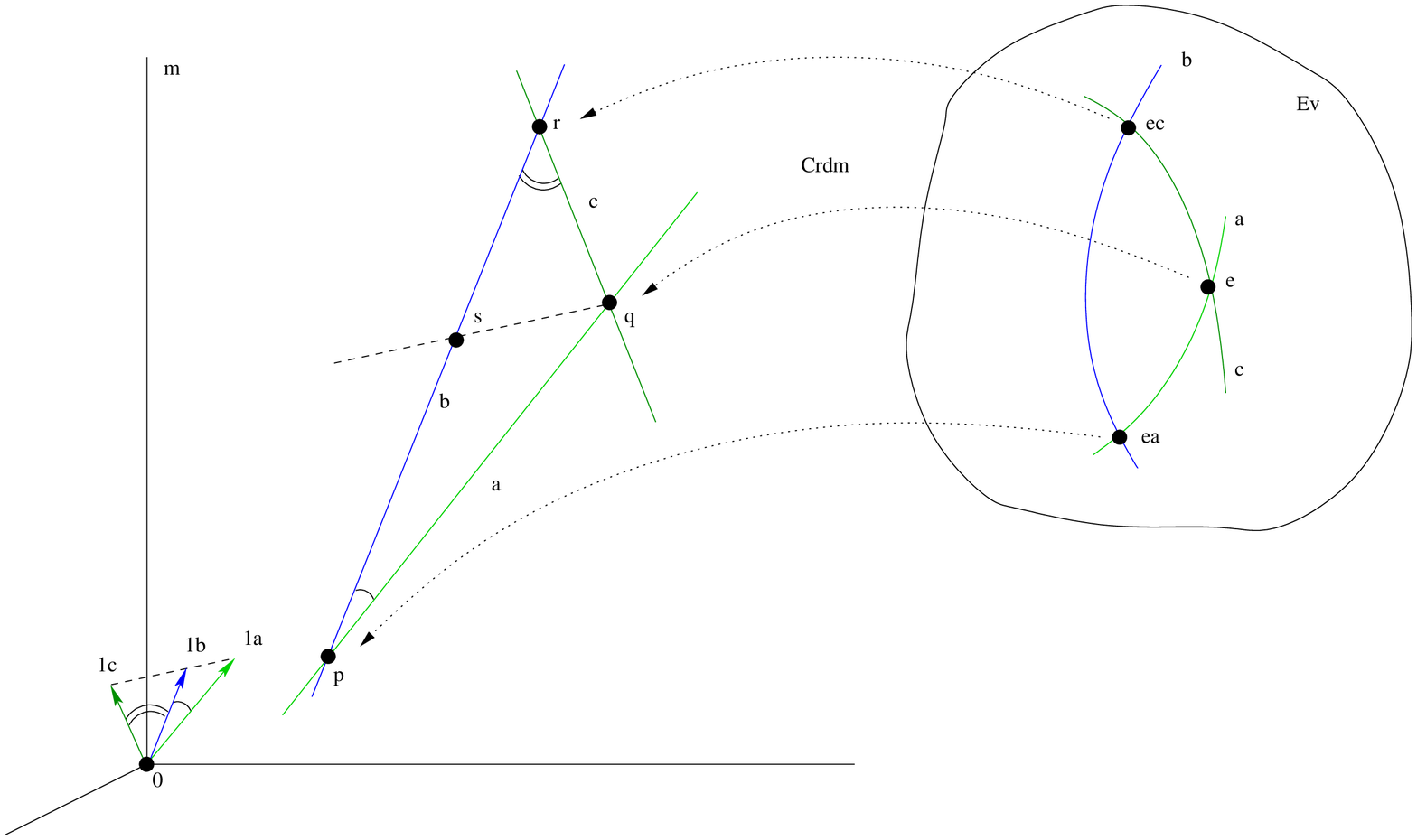}
\caption{\label{twpp} Illustration of relation
  $\mtwp_m(\widehat{ac},b)(e_a,e,e_c)$ and the proof of
  Proposition~\ref{prop-tp}}
\end{center}
\end{figure}

Let $a,b,c\in\IOb$ and $e_a,e,e_b\in Ev$.  Let
$\Df{\time(\widehat{ac}<b)}(e_a,e,e_b)$ be the abbreviation of
$\time_a(e_a,e)+\time_c(e,e_c)<\time_b(e_a,e_c)$.  The definitions of
$\Df{\time(\widehat{ac}=b)}(e_a,e,e_b)$ and
$\Df{\time(\widehat{ac}>b)}(e_a,e,e_b)$ are analogous.  Using this
notation, we can formulate the twin paradox as follows:
\begin{description}
\item[\Ax{TwP}] Every observer $m$ observes the twin paradox in every twin paradox situation:
\begin{multline*}
\forall m,c,a,b\in \IOb\enskip \forall e,e_a,e_c\in Ev_m\\
\mtwp_m(\widehat{ac},b)(e_a,e,e_c) \then \time(\widehat{ac}<b)(e_a,e,e_c).
\end{multline*}
\end{description}
We define \Ax{noTwP} and \Ax{antiTwP} by replacing $<$ by $=$ and $>$ in the formula \ax{TwP}, respectively.

\begin{rem}
For convenience, we quantify over events too.  That does not mean
abandoning our FOL language.  It is just simplifying the
formalization of our axioms.  Instead of events we could speak about
observers and spacetime locations.  For example, instead of $\forall
e\in Ev_m\enskip \phi$ we could write $\forall p\in Cd_m\enskip
\phi[e\!\leadsto\! \ev_m(p)]$, where none of $p_1\ldots p_d$ occurs
free in $\phi$, and $\phi[e\!\leadsto\!  \ev_m(p)]$ is the formula
obtained from $\phi$ by substituting $\ev_m(p)$ for $e$ in all free
occurrences.  Similarly, we can replace $\forall e\in Ev\enskip \phi$
by $\forall m\in\IOb\enskip\forall e\in Ev_m\enskip \phi$.
\end{rem}

\begin{figure}[h!btp]
\begin{center}
\psfrag{p}{$p$}
\psfrag{q1}{$q_3$}
\psfrag{q2}{$q_2$}
\psfrag{q3}{$q_1$}
\psfrag{r}{${}^\ddag r,r$}
\psfrag{p+}{${}^\ddag p$}
\psfrag{o}{$o$}
\psfrag{convex}[bl][bl]{convex}
\psfrag{flat}[l][l]{flat}
\psfrag{concave}[bl][bl]{concave}
\includegraphics[keepaspectratio, width=0.6\textwidth]{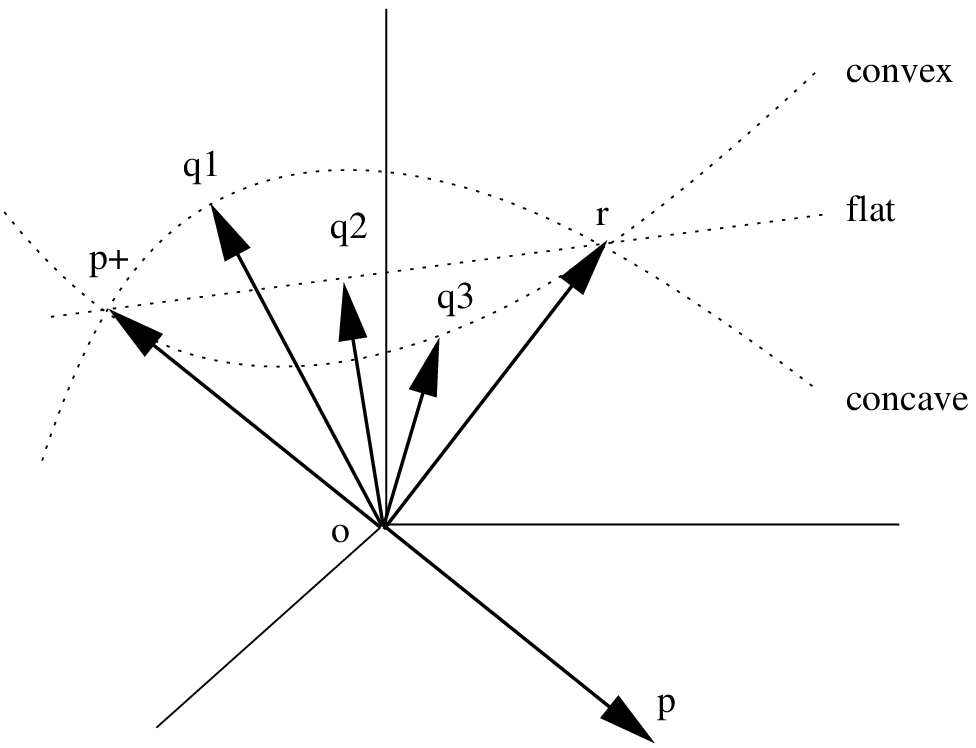}
\caption{\label{fig-conv} Illustration of relations $\convex(p,q_1,r)$, $\Bw(p,q_2,r)$ and $\concave(p,q_3,r)$}
\end{center}
\end{figure}

We say that $q\in\Q^d$ is (strictly) {\bf between} $p\in\Q^d$ and
$r\in\Q^d$ iff there is $\lambda\in Q$ such that $q=\lambda
p+(1-\lambda)r$ and $0<\lambda<1$.  This situation is denoted by
$\Df{\Bw}(p,q,r)$.
Let $p,q,r \in \Q^d$ and $\mu\in\Q$ such that $\Bw(p,\mu q,r)$.  In
this case we use notations $\Df{\convex}(p,q,r)$ and
$\Df{\concave}(p,q,r)$ if $1<\mu$ and $0<\mu<1$, respectively.
For convenience, we introduce the following notation:
\begin{equation*}
\Df{{}^\ddag p}\leteq\left\{
\begin{array}{ll}
\phantom{-}p &\text{ if \;} p_t \ge 0,\\
-p & \mbox{ if\; } p_t < 0.
\end{array}
\right.
\end{equation*}

\begin{prop}\label{prop-tp}
Assume \ax{Kinem_0}.
Let $m$, $a$, $b$, and $c$ be observers and $e$, $e_a$ and $e_c$ events such that $\mtwp_m(\widehat{ac},b)(e_a,e,e_c)$.
Then 
\begin{alignat*}{3}
&\time(\widehat{ac}<b)(e_a,e,e_c) &\quad &\Iff &\quad
  &\convex({}^\ddag1_m^a,{}^\ddag1_m^b,{}^\ddag1_m^c),\\ &\time(\widehat{ac}=b)(e_a,e,e_c)
  & &\Iff &
  &\Bw({}^\ddag1_m^a,{}^\ddag1_m^b,{}^\ddag1_m^c),\\ &\time(\widehat{ac}>b)(e_a,e,e_c)
  & &\Iff & &\concave({}^\ddag1_m^a,{}^\ddag1_m^b,{}^\ddag1_m^c).
\end{alignat*}
\end{prop}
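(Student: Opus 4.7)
I will work entirely in $m$'s coordinates. Writing $p_a:=Crd_m(e_a)$, $p_e:=Crd_m(e)$, $p_c:=Crd_m(e_c)$ and $T_a:=\time_a(e_a,e)$, $T_b:=\time_b(e_a,e_c)$, $T_c:=\time_c(e,e_c)$, the plan is to proceed in three stages: (i) use \ax{AxLinTime} to turn the three sides of the coordinate ``twin triangle'' into positive-scalar multiples of the future-oriented time-unit vectors ${}^\ddag 1_m^a$, ${}^\ddag 1_m^b$, ${}^\ddag 1_m^c$; (ii) combine them into a single linear identity among those three vectors; (iii) read off the $\convex/\Bw/\concave$ trichotomy from a normalised form of that identity.

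For stage (i), I may assume the ordering $\time_m(e_a)<\time_m(e)<\time_m(e_c)$, since the opposite ordering is handled by the evident symmetry $a\leftrightarrow c$, $e_a\leftrightarrow e_c$ under which both sides of the claim are invariant. Because $a$ lies at both $e_a$ and $e$, axiom \ax{AxLinTime} puts $p_a,p_e$ on the line $\wl_m(a)$ with direction $1_m^a$ and gives $|p_e-p_a|=T_a\cdot|1_m^a|$; analogous statements hold for $b$ on $\{e_a,e_c\}$ and for $c$ on $\{e,e_c\}$. The assumed ordering forces $(p_e-p_a)_\tau>0$, so $(1_m^a)_\tau\ne 0$; hence ${}^\ddag 1_m^a$ is well-defined with positive time-component, and $p_e-p_a$ must be a \emph{positive} scalar multiple of it. Combining this with the length formula yields $p_e-p_a=T_a\cdot {}^\ddag 1_m^a$, and the same argument produces $p_c-p_e=T_c\cdot {}^\ddag 1_m^c$ and $p_c-p_a=T_b\cdot {}^\ddag 1_m^b$. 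Adding the first two and comparing with the third gives the key identity
\[
T_b\cdot {}^\ddag 1_m^b \;=\; T_a\cdot {}^\ddag 1_m^a \;+\; T_c\cdot {}^\ddag 1_m^c.
\]

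For stage (iii), set $S:=T_a+T_c>0$, $\lambda:=T_a/S\in(0,1)$ and $\mu:=T_b/S>0$; the identity rewrites as $\mu\cdot {}^\ddag 1_m^b=\lambda\cdot {}^\ddag 1_m^a+(1-\lambda)\cdot {}^\ddag 1_m^c$, which is exactly $\Bw({}^\ddag 1_m^a,\,\mu\cdot {}^\ddag 1_m^b,\,{}^\ddag 1_m^c)$. By the definitions in the paper, the three cases $\mu>1$, $\mu=1$, $0<\mu<1$ translate verbatim into $\convex$, $\Bw$, $\concave$ of $({}^\ddag 1_m^a,\,{}^\ddag 1_m^b,\,{}^\ddag 1_m^c)$; via $\mu=T_b/(T_a+T_c)$ they match $T_b>T_a+T_c$, $T_b=T_a+T_c$, $T_b<T_a+T_c$, i.e.\ the three conclusions $\time(\widehat{ac}<b)$, $\time(\widehat{ac}=b)$, $\time(\widehat{ac}>b)$ of the proposition (evaluated at $(e_a,e,e_c)$), proving all three equivalences at once. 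The only delicate step is the orientation bookkeeping in stage~(i) that lifts the unsigned length identities of \ax{AxLinTime} to \emph{signed} equalities along ${}^\ddag 1_m^{\,\cdot}$; everything else is routine linear algebra over the ordered field $\Q$.
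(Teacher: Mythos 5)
Your stages (i)--(ii) are correct, and they are in substance the paper's own computation done in a cleaner vectorial form: the signed identity $T_b\cdot{}^\ddag 1^b_m \;=\; T_a\cdot{}^\ddag 1^a_m+T_c\cdot{}^\ddag 1^c_m$ is exactly what the paper extracts via its similar-triangles argument, and your orientation bookkeeping (using $(p_e-p_a)_\tau>0$ to force $(1^a_m)_\tau\neq 0$ and positive scalars) is sound. The gap is in stage (iii), i.e.\ precisely in the ``$\Longleftarrow$'' halves of the three claimed equivalences. The relations $\convex$, $\Bw$, $\concave$ on a triple are existential in the scalar: $\convex({}^\ddag 1^a_m,{}^\ddag 1^b_m,{}^\ddag 1^c_m)$ only says that \emph{some} $\mu'>1$ satisfies $\Bw({}^\ddag 1^a_m,\mu'\,{}^\ddag 1^b_m,{}^\ddag 1^c_m)$. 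Your identity exhibits $\Bw$ for the particular $\mu=T_b/(T_a+T_c)$, which gives the ``$\Longrightarrow$'' directions; but to pass from the geometric relation back to the corresponding inequality for $T_b$ you need to know that this $\mu$ is the \emph{only} scalar placing a multiple of ${}^\ddag 1^b_m$ strictly between ${}^\ddag 1^a_m$ and ${}^\ddag 1^c_m$. The phrase ``translate verbatim'' silently assumes this uniqueness (equivalently, that at most one of $\convex$, $\Bw$, $\concave$ can hold for the triple).

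That uniqueness is not automatic, and it is exactly where the clause $b\notin e$ of $\mtwp_m(\widehat{ac},b)(e_a,e,e_c)$ --- which your proof never invokes --- must enter. If ${}^\ddag 1^a_m$ and ${}^\ddag 1^c_m$ were both multiples of ${}^\ddag 1^b_m$, the segment between them would lie on the line $\{\lambda\,{}^\ddag 1^b_m:\lambda\in\Q\}$, a whole interval of admissible $\mu$'s would exist, and $\convex$, $\Bw$ and $\concave$ could hold simultaneously while only one of the three time relations does; so without excluding this configuration the equivalences genuinely fail, which shows the missing step is not cosmetic. The repair is short: if two distinct scalars worked, then ${}^\ddag 1^a_m$, hence $p_e-p_a$, would be parallel to ${}^\ddag 1^b_m$, hence to $p_c-p_a$, so $Crd_m(e)$ would lie on the line through $Crd_m(e_a)$ and $Crd_m(e_c)$, which by \ax{AxLinTime} is $\wl_m(b)$, giving $b\in\ev_m(Crd_m(e))=e$, a contradiction. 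This is the nondegeneracy the paper establishes at the very start of its proof (``the triangle $pqr$ is nondegenerate since $p,r\in\wl_m(b)$ but $q\notin\wl_m(b)$'') and then uses, together with the trichotomy of the time relations, to obtain the converse implications; once you add this one observation, your argument is complete and otherwise parallels the paper's.
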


\begin{proof}[Proof]
Let $m$, $a$, $b$, and $c$ be observers and $e$, $e_a$ and $e_c$
events such that $\mtwp_m(\widehat{ac},b)(e_a,e,e_c)$.  Let us
abbreviate time-unit vectors ${}^\ddag 1^k_m$ to $k^\ddag$ throughout
this proof.  Let $p=Crd_m(e_a)$, $q=Crd_m(e)$ and $r=Crd_m(e_c)$.  We
have that $p\neq r$ since $p_\tau<r_\tau$ or $r_\tau<p_\tau$.
Therefore, by \ax{AxLinTime}, the triangle $pqr$ is nondegenerate
since $p,r\in \wl_m(b)$ but $q\not\in \wl_m(b)$.  Let us first show that
$b$ measures the same length of time between $e_a$ and $e_c$ as $a$
and $c$ together if $\Bw(a^\ddag,b^\ddag,c^\ddag)$ holds.  Let $s$
be the intersection of line $pr$ and the line parallel to $a^\ddag
c^\ddag$ through $q$, see Figure~\ref{twpp}.  Since $\Bw(a^\ddag,b^\ddag,c^\ddag)$ holds, the triangles
$oa^\ddag b^\ddag$ and $pqs$ are similar; and the triangles $ob^\ddag
c^\ddag$ and $rsq$ are similar.  Thus
\begin{equation*}
\frac{|p-q|}{|a^\ddag|}=\frac{|p-s|}{|b^\ddag|} \;\text{ and }\; \frac{|q-r|}{|c^\ddag|}=\frac{|s-r|}{|b^\ddag|}
\end{equation*}
hold. 
From which, by \ax{AxLinTime}, it follows that
\begin{multline*}
\Big|\time_a(e_a,e)\Big|+\Big|\time_c(e,e_c)\Big| = \frac{|p-q|}{|a^\ddag|}+\frac{|q-r|}{|c^\ddag|}\\
=\frac{|p-s|+|s-r|}{|b^\ddag|}=\frac{|r-p|}{|b^\ddag|}= \Big|\time_c(e_a,e_c)\Big|.
\end{multline*}
Hence $\time(\widehat{ac}=b)(e_a,e,e_c)$ holds if
$\Bw(a^\ddag,b^\ddag,c^\ddag)$.  By \ax{AxLinTime}, $b$ measures more
(less) time between $e_a$ and $e_c$ iff his time-unit vector is
shorter (longer).  Thus we get that $\time(\widehat{ac}<b)(e_a,e,e_c)$
holds if $\convex(a^\ddag,b^\ddag,c^\ddag)$, and
$\time(\widehat{ac}>b)(e_a,e,e_c)$ holds if
$\concave(a^\ddag,b^\ddag,c^\ddag)$.  The converse implications also
hold since one of the relations $\convex$, $\Bw$ and $\concave$ holds
for $a^\ddag$, $b^\ddag$ and $c^\ddag$, and only one of the relations
$\time(\widehat{ac}<b)$, $\time(\widehat{ac}=b)$ and
$\time(\widehat{ac}>b)$ can hold for events $e_a$, $e$ and $e_c$.
This completes the proof.
\end{proof}

A set $H\subseteq \Q^d$ is called {\bf convex} iff $\convex(p,q,r)$
for all $p,q,r\in H$ for which there is $\mu\in Q$ such that $\Bw(p,\mu q,
r)$.  We call $H$ {\bf flat} or {\bf concave} if $\convex(p,q,r)$ is
replaced by $\Bw(q,r,p)$ or $\concave(r,p,q)$, respectively. 
\begin{rem}\label{rem-conv}
If there are no $p,q,r\in H$ for which there is a $\mu\in \Q^+$
such that $\Bw(p,\mu q,r)$ holds, then $H$ is convex, flat and
concave at the same time. To avoid these undesired situations, let us
call $H$ \df{nontrivial} if there are $p,q,r\in H$ such that
$\Bw(p,\mu q,r)$ holds for a $\mu\in \Q^+$. By the respective
definitions, it is easy to see that any nontrivial convex (flat,
concave) set intersects a halfline at most once.
\end{rem}
\noindent
Let us define the {\bf Minkowski sphere} here as
$\Df{MS^\ddag_m}\leteq\big\{\, {}^\ddag 1^k_m \::\: k\in\IOb\,\big\}.$

\begin{rem}
Convexity as used here is not far from convexity as understood in
geometry or in the case of functions.  For example, in the models of
$\ax{Kinem_0}+\ax{AxThExp^+}$ or $\ax{SpecRel^-_\d}+\ax{AxThExp}$ (see next sections) the
Minkowski Sphere $MS^\ddag_m$ is convex in our sense iff the set of
points above it $\setopen p\in\Q^d \,:\, \exists q\in
MS^\ddag_m\enskip p_\tau\ge q_\tau\setclose$ is convex in the
geometrical sense.
\end{rem}

\begin{rem}\label{rem-convMS}
By Remark \ref{rem-conv}, if $MS^\ddag_m$ is a nontrivial convex
(flat, concave) set, it intersects a line at most once.
\end{rem}
\noindent
The following is a corollary of Proposition~\ref{prop-tp}.
\begin{cor} 
\label{cor-twp}
Assume \ax{Kinem_0}.
Then
\begin{alignat*}{3}
&\forall m\in\IOb\enskip MS^\ddag_m \text{ is convex} &\enskip &\Then &\enskip & \ax{TwP}, \\
&\forall m\in\IOb\enskip MS^\ddag_m \text{ is flat} & &\Then & &\ax{noTwP}, \\
&\forall m\in\IOb\enskip MS^\ddag_m \text{ is concave}& &\Then & &\ax{antiTwP}.
\end{alignat*}
\end{cor}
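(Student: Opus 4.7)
The plan is to read off all three implications from Proposition~\ref{prop-tp} once I establish one geometric fact about any twin paradox situation: if $\mtwp_m(\widehat{ac},b)(e_a,e,e_c)$ holds, then the three time-unit vectors $a^\ddag := {}^\ddag 1^a_m$, $b^\ddag := {}^\ddag 1^b_m$ and $c^\ddag := {}^\ddag 1^c_m$ all belong to $MS^\ddag_m$ and satisfy $\Bw(a^\ddag,\mu b^\ddag,c^\ddag)$ for some $\mu\in\Q^+$. Given this, the definition of a convex (respectively flat, concave) set immediately upgrades the existence of $\mu$ to $\convex(a^\ddag,b^\ddag,c^\ddag)$ (respectively $\Bw(a^\ddag,b^\ddag,c^\ddag)$, $\concave(a^\ddag,b^\ddag,c^\ddag)$), and Proposition~\ref{prop-tp} then delivers $\time(\widehat{ac}<b)(e_a,e,e_c)$ (respectively $=$, $>$), which is exactly the conclusion of \ax{TwP} (respectively \ax{noTwP}, \ax{antiTwP}). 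So the three parts of the corollary reduce to the same lemma, proved once.

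For the geometric fact, membership in $MS^\ddag_m$ is immediate since $a,b,c\in\IOb$. For the between-ness I would set $p:=Crd_m(e_a)$, $q:=Crd_m(e)$, $r:=Crd_m(e_c)$. The definition of $\mtwp_m$ gives $a\in e_a\cap e$, $b\in e_a\cap e_c$, $c\in e\cap e_c$, so $p,q\in\wl_m(a)$, $p,r\in\wl_m(b)$ and $q,r\in\wl_m(c)$. By \ax{AxLinTime} there are scalars $\lambda_a,\lambda_b,\lambda_c$ with $q-p=\lambda_a\cdot 1^a_m$, $r-p=\lambda_b\cdot 1^b_m$ and $r-q=\lambda_c\cdot 1^c_m$. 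The monotonicity clause in $\mtwp_m$ together with the $^\ddag$ normalization lets me replace each $1^\cdot_m$ by its daggered version and take all three scalars positive. Adding the first and third equations and equating with the second yields
\begin{equation*}
\lambda_b\cdot b^\ddag \;=\; \lambda_a\cdot a^\ddag + \lambda_c\cdot c^\ddag,
\end{equation*}
and setting $\mu:=\lambda_b/(\lambda_a+\lambda_c)$ presents $\mu b^\ddag$ as $\sigma\, a^\ddag+(1-\sigma)\, c^\ddag$ with $\sigma:=\lambda_a/(\lambda_a+\lambda_c)\in(0,1)$, which is $\Bw(a^\ddag,\mu b^\ddag,c^\ddag)$.

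The only step that needs genuine care is the sign-bookkeeping that makes all three $\lambda$'s positive: this is precisely where the time-ordering clause of $\mtwp_m$ (that $\time_m(e_a),\time_m(e),\time_m(e_c)$ are strictly monotonic) and the definition of $^\ddag$ must be combined. Once positivity is secured, the rest is a direct application of Proposition~\ref{prop-tp} and of the set-theoretic definitions of convex, flat and concave given just before the corollary.
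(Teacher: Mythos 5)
Your proposal is correct and takes essentially the paper's intended route: the corollary is read off from Proposition~\ref{prop-tp} by feeding the convexity (flatness, concavity) of $MS^\ddag_m$ into its equivalences. The auxiliary fact you prove explicitly --- that in any twin paradox situation there is a $\mu\in\Q^+$ with $\Bw({}^\ddag 1^a_m,\mu\,{}^\ddag 1^b_m,{}^\ddag 1^c_m)$ --- is exactly the step the paper leaves implicit (it is asserted without proof at the end of the proof of Proposition~\ref{prop-tp}, where it is claimed that one of the three relations must hold), and your sign bookkeeping via \ax{AxLinTime} and the time-ordering clause of $\mtwp_m$ fills it in correctly.
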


The implications in Corollary \ref{cor-twp} cannot be reversed
because there may be observers that are not part of any twin paradox
situation.  We can resolve this problem by using the following axiom to
shift observers in order to create twin paradox situations.
\begin{description}
\item[\Ax{AxShift}] If an observer observes another observer with a
  certain time-unit vector, it also observes still another observer,
  with the same time-unit vector, at each coordinate point of its
  coordinate domain:
\begin{equation*}
\forall m,k\in \IOb\enskip \forall p\in Cd_m\; \exists h \in\IOb \quad h\in \ev_m(p) \land 1^k_m=1^h_m.
\end{equation*} 
\end{description}
Axiom \ax{AxShift} postulates the existence of some observers. Since
by observes (bodies) we mean potential observers (potential bodies)
these kinds of assumptions are quite natural, see also axioms
\ax{AxThExp^+}, \ax{AxThExp^*} and \ax{AxThExp} on pages
\pageref{axthexp+} and \pageref{axthexp}.  Now we can reverse the
implications of Corollary \ref{cor-twp}.
\begin{thm}\label{thm-twp}
Assume \ax{Kinem_0} and \ax{AxShift}. Then
\begin{alignat*}{3}
&\ax{TwP} &\enskip &\Iff &\enskip & \forall m\in\IOb\enskip MS^\ddag_m \text{ is convex,}\\
&\ax{noTwP} & &\Iff & &\forall m\in\IOb\enskip MS^\ddag_m \text{ is flat,}\\
&\ax{antiTwP} & &\Iff & &\forall m\in\IOb\enskip MS^\ddag_m \text{ is concave}.
\end{alignat*}
\end{thm}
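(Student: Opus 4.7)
The ``$\Leftarrow$'' direction of each of the three equivalences is already given by Corollary~\ref{cor-twp}, so the plan is to establish the three ``$\Rightarrow$'' directions. Since all three are proved by the same construction, with the ``$<$''/$\convex$ clauses of Proposition~\ref{prop-tp} replaced by the ``$=$''/$\Bw$ or ``$>$''/$\concave$ clauses, I will sketch only the implication $\ax{TwP}\Rightarrow\forall m\in\IOb\,MS^\ddag_m$ is convex. Assuming $\ax{TwP}$, fix $m\in\IOb$; given $p,q,r\in MS^\ddag_m$ and $\mu\in\Q$ with $\Bw(p,\mu q,r)$---equivalently $\mu q=\lambda p+(1-\lambda)r$ for some $0<\lambda<1$---pick $a,b,c\in\IOb$ realising $p={}^\ddag 1^a_m$, $q={}^\ddag 1^b_m$, $r={}^\ddag 1^c_m$. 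The task is to fabricate a twin-paradox situation whose associated time-unit triangle is $(p,q,r)$ and then read off $\mu>1$ from Proposition~\ref{prop-tp}.

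The construction uses $\ax{AxShift}$ three times. Since $o\in Cd_m$ by $\ax{AxSelf}$, $\ax{AxShift}$ furnishes $\tilde a,\tilde b\in\ev_m(o)$ with $1^{\tilde a}_m=1^a_m$ and $1^{\tilde b}_m=1^b_m$. The point $q_0\leteq\lambda p$ lies on $\wl_m(\tilde a)$ by $\ax{AxLinTime}$, and hence in $Cd_m$, so a third use of $\ax{AxShift}$ produces $\tilde c\in\ev_m(q_0)$ with $1^{\tilde c}_m=1^c_m$. Setting $r_0\leteq\mu q=\lambda p+(1-\lambda)r$, the identities $r_0=\mu q$ and $r_0-q_0=(1-\lambda)r$ place $r_0$ on both $\wl_m(\tilde b)$ and $\wl_m(\tilde c)$. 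Thus the events $e_a\leteq\ev_m(o)$, $e\leteq\ev_m(q_0)$, $e_c\leteq\ev_m(r_0)$ carry the required meetings of $\tilde a,\tilde b,\tilde c$, and the time-ordering $\time_m(e_a)<\time_m(e)<\time_m(e_c)$ follows from $p_\tau,r_\tau>0$ (elements of $MS^\ddag_m$ have positive time component) together with $0<\lambda<1$. With $\mtwp_m(\widehat{\tilde a\tilde c},\tilde b)(e_a,e,e_c)$ secured, $\ax{TwP}$ forces $\time(\widehat{\tilde a\tilde c}<\tilde b)(e_a,e,e_c)$, and Proposition~\ref{prop-tp} then delivers $\convex(p,q,r)$, so that the unique $\mu$ realising the betweenness satisfies $\mu>1$.

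The principal technical obstacle will be verifying the clause $\tilde b\notin e$ in the definition of $\mtwp$, which reduces to $q_0\notin\wl_m(\tilde b)$ and hence to $p$ not being parallel to $q$. A direct inspection of $\mu q=\lambda p+(1-\lambda)r$ shows that $p\parallel q$ forces $r$ to lie on the line through the origin spanned by $q$ as well, so the condition $\tilde b\notin e$ holds exactly when $p,q,r$ are not collinear with the origin---the generic case in which the above construction succeeds. The residual degenerate case of pairwise parallel worldlines supports no twin-paradox situation at all; this case I would absorb into the trivial/vacuous alternative by appealing to Remarks~\ref{rem-conv} and~\ref{rem-convMS}. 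The $\ax{noTwP}$ and $\ax{antiTwP}$ implications then follow by reading the same construction through the $\Bw$ and $\concave$ clauses of Proposition~\ref{prop-tp}, respectively.
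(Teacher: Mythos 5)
Your proposal takes essentially the same route as the paper's own proof: the ``$\Leftarrow$'' directions are delegated to Corollary~\ref{cor-twp}, and the ``$\Rightarrow$'' directions are obtained by using \ax{AxShift} to realize an arbitrary triple $p,q,r\in MS^\ddag_m$ with $\Bw(p,\mu q,r)$ as the time-unit vectors of observers placed in a twin-paradox situation, after which Proposition~\ref{prop-tp} yields $\convex$, $\Bw$ or $\concave$ accordingly; your explicit three-fold application of \ax{AxShift} (shifting $a$ and $b$ to $\ev_m(o)$ and $c$ to $\ev_m(\lambda p)$) merely spells out what the paper compresses into the single sentence ``by \ax{AxShift} there are observers $a$, $b$ and $c$ in a twin paradox situation with the given time-unit vectors.'' The degenerate case you flag --- $p$, $q$, $r$ pairwise parallel, where the clause $\tilde b\notin e$ fails and Remark~\ref{rem-conv} does not literally apply --- is passed over silently in the paper's proof as well, so your treatment, which at least names the issue, is no less complete than the published argument.
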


\begin{proof}[Proof]
By Corollary~\ref{cor-twp}, we have to prove the ``$\Longrightarrow$''
part only.  For that, let us take three points $a'$, $b'$ and $c'$
from $MS^\ddag_m$ for which there is $\mu\in Q$ satisfying
$\Bw({}^\ddag a',\mu b', {}^\ddag c')$. If there are no such points,
$MS^\ddag_m$ is convex, flat and concave at the same time, see Remark
\ref{rem-conv}. Otherwise, by \ax{AxShift} there are
observers $a$, $b$ and $c$ in a twin paradox situation such that
$1^a_m=a'$, $1^b_m=b'$ and $1^c_m=c'$.  Thus from
Proposition~\ref{prop-tp} we get that $MS^\ddag_m$ has the desired
property.
\end{proof}

In the sections below we will use the following concept.  Let $\Sigma$
and $\Gamma$ be sets of formulas, and let $\varphi$ and $\psi$ be
formulas of our language.  Then $\Sigma$ {\bf logically implies}
$\varphi$, in symbols $\Sigma\Df{\models}\varphi$, iff $\varphi$ is
true in every model of $\Sigma$.  To simplify our notations, we use
the plus sign between formulas and sets of formulas in the following
way: $\Sigma+\Gamma\leteq \Sigma\cup\Gamma$,
$\varphi+\psi\leteq\setopen\varphi,\psi\setclose$ and
$\Sigma+\varphi\leteq\Sigma\cup\setopen\varphi\setclose$.

\begin{rem}
Let us note that the fewer axioms $\Sigma$ contains, the stronger the
logical implication $\Sigma\models\varphi$ is, and similarly the more
axioms $\Sigma$ contains the stronger the counterexample
$\Sigma\not\models\varphi$ is.
\end{rem}

\begin{rem}
By G\"odel's completeness theorem, all the theorems of this paper remain
valid if we replace the relation of logical consequence ($\models$) by
the deducibility relation of FOL ($\vdash$).
\end{rem}

%%%%%%%%%%%%%%%%%%%%%%%%%%%%%%%%%%%%%%%%%%
\section{Consequences for Newtonian kinematics}
%%%%%%%%%%%%%%%%%%%%%%%%%%%%%%%%%%%%%%%%%%
\label{conk-sec}

Let us investigate the logical connection between No-TwP and the Newtonian assumption on the absoluteness of time.

\begin{description}
\item[\Ax{AbsTime}] Observers measure the same time elapsing between events:
\begin{equation*}
\forall m,k\in \IOb\enskip\forall e_1,e_2\in Ev\quad \time_m(e_1,e_2)=\time_k(e_1,e_2).
\end{equation*}
\end{description}

To strengthen our axiom system, we introduce two axioms that ensure the
existence of several observers.
\begin{description}
\item[\Ax{AxThExp^+}]\label{axthexp+} Observers can move in any direction at any finite speed:
\begin{equation*}
\forall m\in \IOb\enskip\forall p,q\in Q^d \quad p_\tau\neq q_\tau 
\then \exists k\in\IOb\quad k\in \ev_m(p)\cap \ev_m(q).
\end{equation*}
\end{description}
This axiom as well as its variants (\ax{AxThExp^*} below and
\ax{AxThExp} on page \pageref{axthexp}) are closely related to the
assumptions of homogeneity and isotropy of space since in some respect
they say that there is no difference between the different points and
directions in space. A more experimental version of axiom
\ax{AxThExp^+} is the following:
\begin{description}
\item[\Ax{AxThExp^*}]\label{axthex*} Observers
  can move in any direction at a speed which is arbitrarily close to any finite speed:
\begin{multline*}
\forall m\in \IOb\enskip \forall p,q\in \Q^d\enskip\forall
\varepsilon\in \Q^+ \quad p_\tau\neq q_\tau \\\then
\exists k\in\IOb\enskip \exists {q}'\in \Q^d\quad
|q-{q}'|<\varepsilon \land k\in \ev_m(p)\cap \ev_m({q}').
\end{multline*}
\end{description}
Since the accuracy of an experiment is finite and we can make only
finitely many experiments, axiom \ax{AxThExp^*} is a more plausible
assumption than \ax{AxThExp^+} from empirical point of view.

By the following theorem, \ax{noTwP} logically implies \ax{AbsTime} if
\ax{AxThExp^+} (and some auxiliary axioms) are assumed; however, if we
assume the more experimental axiom \ax{AxThExp^*} instead of
\ax{AxThExp^+}, \ax{AbsTime} does not follow from \ax{noTwP}, which is
an astonishing fact since it means that without the strong theoretical
assumption \ax{AxThExp^+} we would not be able to conclude that time
is absolute in the Newtonian sense even if there were no twin paradox
in our world.

\begin{thm} 
\label{thm-univtime}
\begin{align}
\label{eq-notwp1}
\ax{AxEOF}+\ax{AbsTime}&\models \ax{noTwP}, \text{ and}\\
\label{eq-notwp1.5}
\ax{Kinem_0}+\ax{AxShift}+\ax{AxThExp^+}+\ax{noTwP}&\models \ax{AbsTime}, \text{ but}\\
\label{eq-notwp2}
\ax{Kinem_0}+\ax{AxShift}+\ax{AxThExp^*}+\ax{noTwP}&\not\models \ax{AbsTime}.
\end{align}
\end{thm}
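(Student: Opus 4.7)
The plan is as follows.  For \eqref{eq-notwp1}, unpack definitions: by the definition of twin paradox situation one may assume $\time_m(e_a)<\time_m(e)<\time_m(e_c)$, hence $\time_m(e_a,e_c)=\time_m(e_a,e)+\time_m(e,e_c)$; applying \ax{AbsTime} three times to rewrite each summand and the sum in terms of $\time_a$, $\time_c$, and $\time_b$ yields $\time_a(e_a,e)+\time_c(e,e_c)=\time_b(e_a,e_c)$, which is exactly \ax{noTwP}.

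For \eqref{eq-notwp1.5}, combine Theorem~\ref{thm-twp} with the richness supplied by \ax{AxThExp^+}.  The intermediate target is $MS^\ddag_m\subseteq\setopen p\in\Q^d\setmid p_\tau=1\setclose$ for every $m$, equivalently $(1^k_m)_\tau=1$ for every pair $m,k$.  Theorem~\ref{thm-twp} forces each $MS^\ddag_m$ to be flat; \ax{AxSelf} anchors $1_t=1^m_m\in MS^\ddag_m$; and \ax{AxThExp^+}, by supplying an observer through any two coordinate points with distinct time-components, populates $MS^\ddag_m$ with a vector on every timelike ray through the origin.  Flatness of this richly-populated set, combined with the anchor $1_t$ and the cross-frame consistency of time-unit vectors via $w^k_m$ and $w^m_k$ (enforced by \ax{AxEv} and \ax{AxLinTime}), constrains $MS^\ddag_m$ to the horizontal hyperplane.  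Once $(1^k_m)_\tau=1$ is established, \ax{AbsTime} follows from \ax{AxLinTime}: for events $e_1,e_2$ with $\time_m(e_1)\neq\time_m(e_2)$, \ax{AxThExp^+} gives an observer $h$ whose worldline contains both events, and \ax{AxLinTime} in the $m$- and $k$-frames then yields $\time_m(e_1,e_2)=\time_h(e_1,e_2)=\time_k(e_1,e_2)$; the degenerate case $\time_m(e_1)=\time_m(e_2)$ reduces to the previous one by choosing an auxiliary event $e'$ with $\time_m(e')\neq\time_m(e_1)$ and varying $e'$.

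For \eqref{eq-notwp2}, exhibit a counterexample model.  Take a Galilean-style kinematics with a fixed nonzero parameter $\kappa$ in the quantity field so that each observer $k_v$ of velocity $v$ in $m$'s frame has time-unit vector $1^{k_v}_m=(1,v)/(1-\kappa v)$; then $MS^\ddag_m$ lies on a tilted but still flat hyperplane, so \ax{noTwP} holds by Theorem~\ref{thm-twp}.  A dense but not exact population of velocities and positions (via shifts) realizes $\ax{Kinem_0}+\ax{AxShift}+\ax{AxThExp^*}$ without satisfying the exact \ax{AxThExp^+}.  A direct computation of worldview transformations gives $\time_{k_v}(e_1,e_2)=|1-\kappa v|\cdot\time_m(e_1,e_2)$, so \ax{AbsTime} fails for any $v$ with $\kappa v\neq 0$.

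The central obstacle is the hyperplane-pinning step in \eqref{eq-notwp1.5}.  Flatness alone yields containment only in \emph{some} affine hyperplane, and the tilted family from \eqref{eq-notwp2} shows that non-horizontal hyperplanes survive once \ax{AxThExp^+} is relaxed to \ax{AxThExp^*}.  The key is therefore to exploit the exactness of \ax{AxThExp^+}---observers on every timelike line, not merely $\varepsilon$-approximations---in combination with the worldview-transformation reciprocity forced by $\ax{Kinem_0}$, to rule out any tilt and force the hyperplane to be $\setopen p\in\Q^d\setmid p_\tau=1\setclose$.
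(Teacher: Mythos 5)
Your plan reproduces the paper's own proof in all three items: \eqref{eq-notwp1} is the same direct unpacking the paper dismisses as obvious; for \eqref{eq-notwp1.5} the paper argues exactly as you do (Theorem~\ref{thm-twp} makes $MS^\ddag_m$ flat, \ax{AxThExp^+} puts a time-unit vector in every nonhorizontal direction so the flat set must lie in the horizontal hyperplane through $1^m_m$, whence $(1^k_m)_\tau=\pm1$ and \ax{AbsTime} follows); and your tilted-flat model with dilation factor $|1-\kappa v|$ for \eqref{eq-notwp2} is the paper's counterexample (there with $\kappa=1$, where precisely the directions parallel to the tilted hyperplane are excluded from $\IOb$ --- which is why \ax{AxThExp^+} fails while \ax{AxThExp^*} survives). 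The only differences are presentational: you spell out the derivation of \ax{AbsTime} from the unit time components, which the paper leaves implicit, while your appeal to worldview-transformation ``reciprocity'' in the hyperplane-pinning step is unnecessary --- the pinning is the purely geometric parallel-direction argument the paper gives.
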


\begin{proof}[Proof]
Item \eqref{eq-notwp1} is obvious.

To prove \eqref{eq-notwp1.5}, let us note that $MS^\ddag_m$ is flat by
Theorem \ref{thm-twp} since \ax{Kinem_0}, \ax{AxShift} and \ax{noTwP}
are assumed. So $MS^\ddag_m$ is a subset of a hyperplane. By axiom
\ax{AxThExp^+}, $MS^\ddag_m$ intersects any nonhorizontal line. If the
hyperplane containing $MS^\ddag_m$ were not horizontal, there would be
nonhorizontal lines parallel to it. Therefore $MS^\ddag_m$ has to be a
subset of a horizontal hyperplane.  If $MS^\ddag_m$ were a porper
subset of this hyperplane, there would be nonhorizontal lines not
intersecting it.  So $MS^\ddag_m$ has to be a horizontal hyperplane
containing $\langle 1,0,\ldots,0 \rangle=1^m_m$. Hence the time
components of time-unit vectors are the same for every observer. So
\ax{AbsTime} follows from the assumptions.

To prove \eqref{eq-notwp2}, we construct a model in which
\ax{Kinem_0}, \ax{AxShift}, \ax{AxThExp^*} and \ax{noTwP} hold, but
\ax{AbsTime} does not.  Let $\langle \Q;+,\cdot,<\rangle$ be any
Euclidean ordered field.  Let $B\leteq \Q^d\times\Q^d$.  Let
$\IOb\leteq \setopen \langle p,q\rangle\in\B\,:\, p_\tau\neq
q_\tau \land p_\tau-q_\tau\neq p_2-q_2\setclose$.  Let
$MS^\ddag_{\langle 1,0\rangle}\leteq \{\, x\in\Q^d\,:\,x_\tau-x_2=1 \land x_\tau>0\,\}.$
Let $W(\langle1,0\rangle,\langle p,q\rangle,r)$ hold iff $r$
is in the line  through $p$ and $q$.  Now the worldview relation is given for
observer $\langle1,0\rangle$.  For any other  observer
$\langle p,q\rangle$, let $w^{\langle p,q\rangle}_{\langle1,0\rangle}$
be an affine transformation that takes $o$ to $p$ while its linear
part takes $1_t$ to $MS^\ddag_{\langle 1,0\rangle}\cap\setopen
\lambda(p-q):\lambda\in \Q\setclose$, and leaves  the other basis
vectors fixed.  From these worldview transformations, it is easy to define
the worldview relations of other observers, hence our
model is given.  It is not difficult to see that \ax{Kinem_0},
\ax{AxShift} and \ax{AxThExp^*} are true in this model.  Since
$MS^\ddag_{\langle 1,0\rangle}$ is flat and the worldview
transformations are affine ones, it is clear that $MS^\ddag_m$ is flat
for all $m\in \IOb$.  Hence \ax{noTwP} is also true in this model by
Corollary~\ref{cor-twp}.  It is easy to see that \ax{AbsTime} implies
that $(1^k_m)_\tau=\pm1$ for all $m,k\in\IOb$.  Hence \ax{AbsTime} is
not true in this model, as we claimed.  
\end{proof}

%%%%%%%%%%%%%%%%%%%%%%%%%%%%%%%%%%%%%%%%%%
\section{Consequences for special relativity theory}
%%%%%%%%%%%%%%%%%%%%%%%%%%%%%%%%%%%%%%%%%%
\label{consr-sec}

Now we are going to investigate the consequences of Theorem
\ref{thm-twp} for special relativity.  To do so, let us extend our
language by a new unary relation $\Ph$ on $\B$ for {\bf photons}
(light signals) and formulate an axiom on the constancy of the speed
of light.  For convenience, this speed is chosen to be 1.

\begin{description}
\item[\Ax{AxPh}] For every observer, there is a photon through two
coordinate points $p$ and $q$ iff the slope of $p-q$ is $1$:
\begin{multline*}
\forall m\in \IOb\enskip \forall p,q\in \Q^d\quad |p_\sigma-q_\sigma|=|p_\tau-q_\tau| \\ 
 \iff \Ph\cap \ev_m(p)\cap \ev_m(q)\ne\emptyset.
\end{multline*}
\end{description}
Let us also introduce a symmetry axiom.
\begin{description}\label{psymd}
\item[\Ax{AxSymDist}]If events $e_1$ and $e_2$ are simultaneous for
  both the observers $m$ and $k$, then $m$ and $k$ agree as to the
  spatial distance between $e_1$ and $e_2$:
\begin{multline*}
\forall m,k\in \IOb\enskip \forall e_1,e_2 \in Ev \quad\time_m(e_1,e_2)=\time_k(e_1, e_2)=0 \\
\then \dist_m(e_1,e_2)=\dist_k(e_1,e_2),
\end{multline*}
\end{description}
where the \df{spatial distance} between events $e_1$ and $e_2$
according to observer $m$, in symbols $\Df{\dist_m}(e_1,e_2)$, is formulated as $|Crd_m(e_1)_\sigma-Crd_m(e_2)_\sigma|$.

\noindent
Let us introduce the following axiom system:
\begin{equation*}\label{page-sr}
\boxed{\ax{SpecRel_\d}\leteq \Setopen \ax{AxEOF},\ax{AxSelf}, \ax{AxPh}, \ax{AxEv},\ax{AxSymDist} \Setclose}
\end{equation*}
Now we have a FOL axiom system of special relativity for each
natural number $d\ge2$.

To state the Alexandrov-Zeeman theorem
generalized for fields, we need a definition.
 A map
$\tilde\varphi:\Q^d\rightarrow \Q^d$ is called a {\bf
  field-automorphism-induced} iff there is an automorphism
$\varphi$ of the field $\langle\Q,\cdot,+\rangle$ such that
$\tilde\varphi(p)=\langle \varphi(p_1),\ldots,\varphi(p_d)\rangle$ for
every $p\in\Q^d$. 

\begin{thm}[Alexandrov-Zeeman]\label{thm-az}
Let $F$ be a field and $d\ge3$.  Every bijection from $F^d$ to $F^d$
that transforms lines of slope 1 to lines of slope 1 is a Poincar\'e
transformation composed by a dilation and a field-automorphism-induced
map.
\end{thm}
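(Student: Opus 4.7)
The plan is a three-stage reduction: normalize to fix the origin; upgrade the slope-$1$ preservation property to full collineation; then identify the resulting linear part. First, since every translation is a Poincar\'e transformation, I may compose $\tilde\varphi$ with $x \mapsto x - \tilde\varphi(o)$ and assume $\tilde\varphi(o) = o$. Bijectivity plus the fact that $\tilde\varphi^{-1}$ is also a slope-$1$-preserving bijection ensures that $\tilde\varphi$ preserves the lightlike relation $p \sim q \Leftrightarrow |p_\sigma - q_\sigma| = |p_\tau - q_\tau|$ in both directions.

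The main obstacle is the second stage: showing that $\tilde\varphi$ is a \emph{collineation}, i.e., maps every affine line of $F^d$ to an affine line. For this one must reconstruct arbitrary lines from the relation $\sim$ alone. The light cone $C_p = \{q : p \sim q\}$ is $\sim$-definable, and for $d \ge 3$ the intersection $C_p \cap C_q$ at distinct points $p,q$ is a $(d-2)$-dimensional quadric whose geometric type (a bounded ellipsoid-like set versus an unbounded hyperboloid-like one) distinguishes timelike from spacelike separation. From this, timelike, spacelike, and lightlike lines can all be expressed purely in $\sim$-terms, forcing $\tilde\varphi$ to map every affine line to an affine line. The dimension hypothesis is essential here: in $d=2$ the light cone degenerates to two lines, the reconstruction collapses, and there exist many extra slope-$1$-preserving bijections.

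Once $\tilde\varphi$ is a collineation fixing $o$, the fundamental theorem of affine geometry (Staudt's theorem, valid over any field for $d \ge 2$) yields a factorization $\tilde\varphi = A \circ \tilde\sigma$, where $\tilde\sigma$ is induced by a field automorphism $\sigma$ of $F$ and $A$ is an $F$-linear bijection. Since every field automorphism preserves the polynomial equation $v_\tau^2 = v_2^2 + \cdots + v_d^2$ cutting out the null cone, $\tilde\sigma$ stabilizes slope-$1$ directions, and hence so does $A$. A standard argument (the null cone of a nondegenerate quadric in dimension $\ge 3$ determines its defining bilinear form up to a nonzero scalar) shows that $A$ is a similarity of the Minkowski form, and therefore factors as a dilation $D$ composed with a Lorentz transformation $L$. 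Recombining $L$, $D$, $\tilde\sigma$ with the initial translation yields the desired factorization $\tilde\varphi = P \circ D \circ \tilde\sigma$ with $P$ a Poincar\'e transformation, $D$ a dilation, and $\tilde\sigma$ field-automorphism-induced, completing the sketch.
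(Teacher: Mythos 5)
Be aware of what you are being compared against: the paper does not prove Theorem~\ref{thm-az} at all, it defers to Vroegindewey \cite{vro} and Vroegindewey--Kreinovic--Kosheleva \cite{VKK}, and the entire content of those papers is the algebraic work needed to make a Zeeman-style argument survive beyond the real field. Your sketch reproduces the classical real-case outline (normalize, show collineation, apply the fundamental theorem of affine geometry, identify the linear part as a similarity of the Minkowski form), but it has a genuine hole exactly at the load-bearing second stage. The claim ``timelike, spacelike, and lightlike lines can all be expressed purely in $\sim$-terms'' is essentially the theorem itself and is asserted, not argued: knowing that the \emph{pair relation} of timelike versus spacelike separation is $\sim$-definable does not by itself make the \emph{lines} definable as point sets, and supplying that reconstruction (light lines first, then parallelism, then general lines) is the bulk of Alexandrov's and Zeeman's proofs.

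Moreover, the specific device you propose does not work at the stated level of generality (``Let $F$ be a field''). Without an order, ``bounded ellipsoid-like versus unbounded hyperboloid-like'' is meaningless; and even the standard first step --- that for lightlike-separated $p\neq q$ the intersection $C_p\cap C_q$ is exactly the light line through $p$ and $q$ --- fails over fields that are not formally real: taking $p=o$ and $q=\langle 1,1,0,\ldots,0\rangle$, the intersection reduces to $x_1=x_2$ together with $x_3^2+\cdots+x_d^2=0$, which over $\mathbb{C}$ (or any field where $-1$ is a square) with $d\ge4$ has nonzero solutions, so the intersection is strictly larger than the line. This is precisely why the cited algebraic generalizations are nontrivial, and why in this paper the theorem is only ever applied to Euclidean ordered fields via \ax{AxEOF}. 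Your first and third stages are fine modulo routine care (the fundamental theorem of affine geometry needs every line to carry at least three points, i.e.\ $|F|>2$, and the lemma that the null cone of an isotropic nondegenerate form in dimension $\ge3$ determines the form up to a scalar should be quoted in a version valid over the relevant fields), but as written the argument either silently restricts to ordered fields or assumes the hard definability step it was supposed to prove.
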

For the proof of Theorem \ref{thm-az}, see \cite{vro},
\cite{VKK}. From this theorem we derive that the worldview
transformations between observers are Poincar\'e ones in the models of
\ax{SpecRel_\d} if $\d\ge3$, cf.\ \cite[Theorem 11.11,
  p.641]{logst}. This fact justifies our calling \ax{SpecRel_\d} an
axiom system of special relativity.

\begin{thm}
\label{thm-poi}
Let $d\ge 3$.
Let $m,k\in\IOb$.
Then
\begin{enumerate}
\item\label{item-poi1} if \ax{AxEOF}, \ax{AxPh} and \ax{AxEv} are assumed,
  $w^k_m$ is a Poincar\'e transformation composed by a dilation $D$
  and a field-automorphism-induced map $\tilde\varphi$;
\item\label{item-poi2} if \ax{AxEOF}, \ax{AxPh}, \ax{AxEv} and \ax{AxSymDist}
  are assumed, $w^k_m$ is a Poincar\'e transformation.
\end{enumerate}
\end{thm}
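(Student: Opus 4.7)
The strategy is to apply the Alexandrov--Zeeman theorem (Theorem~\ref{thm-az}) to $w^k_m$ for part~(\ref{item-poi1}), and then to use \ax{AxSymDist} in part~(\ref{item-poi2}) to force the dilation factor and the field automorphism to be trivial.

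For part~(\ref{item-poi1}), I first verify that $w^k_m$ is a bijection $\Q^d\to\Q^d$. Axiom \ax{AxPh} makes every coordinate point an event location (any point has slope-$1$ partners, hence a photon through it), so $Cd_m=Cd_k=\Q^d$; then \ax{AxEv} yields totality and surjectivity of $w^k_m$, and for injectivity, $\ev_k(q)=\ev_k(q')$ with $q\ne q'$ would require the photons through $q$ in every slope-$1$ direction to also pass through $q'$, which is impossible as soon as $d\ge 2$. Next I check that $w^k_m$ sends slope-$1$ lines to slope-$1$ lines: for any two points on a common slope-$1$ line in $m$'s frame, \ax{AxPh} produces a photon through them, \ax{AxEv} transports it to $k$'s frame, and \ax{AxPh} applied to $k$ forces the images to be slope-$1$ related; a standard step (three pairwise slope-$1$ points must be collinear on a slope-$1$ line) promotes this from pairs to whole lines. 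Theorem~\ref{thm-az} then decomposes $w^k_m$ as a composition of a Poincar\'e transformation, a dilation $D(x)=\lambda x$, and a field-automorphism-induced map $\tilde\varphi$.

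For part~(\ref{item-poi2}), I produce events simultaneous for both observers and compare their spatial distances via \ax{AxSymDist}. Because $d\ge 3$, the linear Lorentz part of $w^k_m$ has a positive-dimensional subspace $V$ of spatial vectors whose images are again spatial and of the same Euclidean length; since both $\tilde\varphi$ (acting coordinatewise) and $D$ preserve the locus ``time${}=0$'', I can select a spatial vector $v$ with $|v_\sigma|=r\in\Q^+$ whose $w^k_m$-image remains spatial. Tracking lengths: $\tilde\varphi$ sends a spatial length $s$ to $\varphi(s)$---using that $\varphi$ commutes with sums of squares and, being automatically order-preserving on a Euclidean ordered field (where positives are precisely the nonzero squares), with extracting positive square roots---$D$ multiplies by $|\lambda|$, and the Poincar\'e part acts isometrically on $V$. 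Hence \ax{AxSymDist} gives $|\lambda|\,\varphi(r)=r$ for every $r\in\Q^+$. Taking $r=1$ and using $\varphi(1)=1$ yields $|\lambda|=1$, and then $\varphi(r)=r$ for every $r\in\Q^+$ forces $\varphi=\mathrm{id}$, leaving $w^k_m$ a Poincar\'e transformation.

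The main obstacle is producing the doubly-simultaneous pair of events with nontrivial spatial separation used in part~(\ref{item-poi2}); this is precisely where the hypothesis $d\ge 3$ matters, for in lower dimensions a nontrivial boost admits no nonzero spatial vector whose image is spatial, and \ax{AxSymDist} would yield no usable constraint on $\lambda$ or $\varphi$.
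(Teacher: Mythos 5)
Your proposal follows the paper's own argument: part (1) is precisely the reduction to the Alexandrov--Zeeman theorem by showing $w^k_m$ is a bijection of $\Q^d$ preserving slope-$1$ lines, and part (2) is the paper's observation that some line orthogonal to the time-axis has a $w^k_m$-image also orthogonal to the time-axis, so \ax{AxSymDist} makes $w^k_m$ distance-preserving there and forces the dilation and the field-automorphism-induced map to be the identity. You simply spell out the length-tracking and $d\ge 3$ details that the paper delegates to its references, so this is essentially the same proof.
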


\begin{proof}[On the proof]
It is not difficult to see that \ax{AxPh} and \ax{AxEv} imply that
$w^k_m$ is a bijection from $\Q^d$ to $\Q^d$ that preserves lines of
slope 1, see, e.g., \cite[Proposition 3.1.3]{myphd}.  Hence Item
\eqref{item-poi1} is a consequence of Theorem \ref{thm-az}.

Now let us see why Item \eqref{item-poi2} is true.  By Item
\eqref{item-poi1}, it is easy to see that there is a line $l$ such
that both $l$ and its $w^k_m$ image are orthogonal to the time-axis.
Thus by \ax{AxSymDist}, $w^k_m$ restricted to $l$ is distance
preserving.  Consequently, both the dilation $D$ and the
field-automorphism-induced map $\tilde\varphi$ in Item
\eqref{item-poi2} have to be the identity map.  Hence $w^k_m$ is a
Poincar\'e transformation.
\end{proof}

\noindent
Let us now formulate another famous prediction of relativity.
\begin{description}
\item[\Ax{SlowTime}] Relatively moving observers' clocks slow down:
\begin{equation*}
\forall m,k\in \IOb\quad \wl_m(k)\neq \wl_m(m) \then \big|(1^k_m)_\tau\big|>1.
\end{equation*}
\end{description}

To investigate the logical connection between \ax{SlowTime} and \ax{TwP}, let us also introduce a weakened axiom system of special relativity:
\begin{equation*}
\boxed{\ax{SpecRel^-_\d}\leteq\Setopen \ax{AxEOF}, \ax{AxSelf}, \ax{AxPh}, \ax{AxEv} \Setclose}
\end{equation*}

Let us note that if $d\ge3$, \ax{SpecRel^-_\d} is strong enough to prove
the most important predictions of special relativity, such as that
moving clocks get out of synchronism, see, e.g., \cite{AMNsamples}.
At the same time, \ax{SpecRel^-_\d} is weak enough not to prove every
prediction of special relativity.  For example, it does not entail \ax{TwP}
or \ax{SlowTime}.  Thus it is possible to compare these predictions
within \ax{SpecRel^-_\d}.  

To prove a theorem about the logical connection between \ax{SlowTime}
and \ax{TwP}, we need the following lemma, which states that the fact
that three observers are in a twin paradox situation does not depend on
the observer that watches them.

\begin{lem}
\label{lem-mtwp}
Let $d\ge3$.
Assume \ax{AxEOF}, \ax{AxPh}, \ax{AxEv} and \ax{AxLinTime}.
Let $m,a,b,c\in\IOb$ and let $e_a,e,e_b\in Ev$.
Then
\begin{equation*}
\mtwp_m(\widehat{ac},b)(e_a,e,e_c)\Iff \mtwp_b(\widehat{ac},b)(e_a,e,e_c).
\end{equation*}
\end{lem}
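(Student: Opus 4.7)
The first four conjuncts of $\mtwp$ (namely $a\in e_a\cap e$, $b\in e_a\cap e_c$, $c\in e\cap e_c$, and $b\notin e$) make no reference to the observing frame, so I would first reduce the problem to proving the equivalence of the two time-ordering conditions: that the triple $\time_m(e_a),\time_m(e),\time_m(e_c)$ is strictly monotone iff the triple $\time_b(e_a),\time_b(e),\time_b(e_c)$ is.

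The main tool is Theorem~\ref{thm-poi}(\ref{item-poi1}): since $d\ge 3$ and we assume \ax{AxEOF}, \ax{AxPh}, \ax{AxEv}, the worldview transformation $w^b_m$ decomposes as a Poincar\'e transformation composed with a dilation and a field-automorphism-induced map $\tilde\varphi$. Because \ax{AxEOF} provides a Euclidean ordered field, in which any automorphism preserves squares and hence positivity, the map $\tilde\varphi$ preserves the sign of each coordinate. I would then observe that each of the three constituents (i) sends timelike vectors to timelike vectors, and (ii) either preserves or uniformly reverses the sign of $v_\tau$ on all timelike $v$. Composing, there is a single sign $\epsilon\in\{+1,-1\}$ such that the ``linear part'' $L$ of $w^b_m$ (satisfying $L(\lambda v)=\varphi(\lambda)\,L(v)$) obeys $\operatorname{sign}(L(v)_\tau)=\epsilon\cdot\operatorname{sign}(v_\tau)$ for every timelike $v$.

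Next I would translate the monotonicity conditions using \ax{AxLinTime}. From $a\in e_a\cap e$ both $Crd_b(e_a)$ and $Crd_b(e)$ lie on $\wl_b(a)$, so there is $\lambda^a\in\Q$ with $Crd_b(e)-Crd_b(e_a)=\lambda^a\cdot 1^a_b$; analogously $Crd_b(e_c)-Crd_b(e)=\lambda^c\cdot 1^c_b$. The vectors $1^a_b$ and $1^c_b$ are timelike because they are images of $1_t$ under the Poincar\'e-type transformations $w^a_b$ and $w^c_b$. The condition $\mtwp_b$ amounts to $\lambda^a(1^a_b)_\tau$ and $\lambda^c(1^c_b)_\tau$ having the same sign. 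Transporting the two vector equalities through $w^b_m$ yields $Crd_m(e)-Crd_m(e_a)=\varphi(\lambda^a)\cdot 1^a_m$ and $Crd_m(e_c)-Crd_m(e)=\varphi(\lambda^c)\cdot 1^c_m$, where $1^a_m=L(\tilde\varphi(1^a_b))$ and similarly for $c$. Taking time components, $\operatorname{sign}\bigl(\time_m(e)-\time_m(e_a)\bigr)=\epsilon\cdot\operatorname{sign}\bigl(\lambda^a(1^a_b)_\tau\bigr)$, and analogously for $\time_m(e_c)-\time_m(e)$; since $\epsilon^2=1$, the same-sign condition characterizing $\mtwp_m$ is exactly equivalent to the one characterizing $\mtwp_b$.

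The main obstacle is establishing the temporal coherence of $L$ with a uniform $\epsilon$, which hinges essentially on the Euclidean hypothesis in \ax{AxEOF}: without the guarantee that field automorphisms of $\Q$ are order-preserving, $\tilde\varphi$ could a priori send some timelike vectors to past-pointing ones and others to future-pointing ones, destroying the coherence.
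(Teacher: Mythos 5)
Your proof is correct, and its skeleton matches the paper's: reduce $\mtwp$ to the time-ordering clause (the remaining conjuncts are manifestly observer-independent) and control $w^b_m$ via Item~\eqref{item-poi1} of Theorem~\ref{thm-poi}. Where you genuinely diverge is in the treatment of the field-automorphism-induced factor $\tilde\varphi$: the paper invokes \ax{AxLinTime} to conclude that the automorphism is trivial, so that $w^b_m$ becomes an affine Poincar\'e-plus-dilation map, whereas you only establish that $\varphi$ is order-preserving --- automatic over a Euclidean ordered field, since the positive elements are exactly the nonzero squares --- and show that this weaker property already suffices. You also make explicit a step the paper leaves implicit: an affine map of this kind does not preserve the time order of arbitrary triples of events, so one must use that $Crd_b(e)-Crd_b(e_a)$ and $Crd_b(e_c)-Crd_b(e)$ are scalar multiples of the timelike vectors $1^a_b$ and $1^c_b$ (this is where \ax{AxLinTime} enters your argument, through the line structure of world-lines) and that the semilinear part of $w^b_m$ preserves or uniformly reverses the sign of the time component on timelike vectors. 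That uniform-sign claim for the Lorentz factor deserves a one-line justification valid over an arbitrary Euclidean ordered field (e.g., via the Minkowski inner product of future-pointing timelike vectors and a Cauchy--Schwarz estimate, available since square roots exist), because connectedness arguments familiar from the reals are not; with that noted, your argument is complete. The net comparison: neither route saves a hypothesis, since you still need \ax{AxLinTime} for the world-line structure, but your version isolates exactly what is required of $\tilde\varphi$ (order-preservation rather than triviality) and spells out the timelike sign bookkeeping, while the paper's version obtains an affine $w^b_m$ outright and is correspondingly terser.
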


\begin{proof}[Proof]
By (1) of Theorem~\ref{thm-poi}, $w^b_m$ is a composition of a
Poincar\'e transformation, a dilation and a field-automorphism-induced
map since \ax{AxEOF}, \ax{AxPh} and \ax{AxEv} are assumed.  By
\ax{AxLinTime}, the field-automorphism is trivial.  Hence $\time_m(e)$
is between $\time_m(e_a)$ and $\time_m(e_c)$ iff $\time_b(e)$ is
between $\time_b(e_a)$ and $\time_b(e_c)$.  This completes the proof
since the other parts of the definition of relation $\mtwp$ do not
depend on observers $m$ and $b$.
\end{proof}

We cannot consistently extend our theory \ax{SpecRel^-_\d} by axiom \ax{AxThExp^+} since
\ax{SpecRel^-_\d} implies the impossibility of faster than light motion
of observers if $d\ge3$, see, e.g., \cite{AMNsamples}. That is why we have
to weaken this axiom.

\begin{description}
\item[\Ax{AxThExp}]\label{axthexp} Observers can move in any direction at any speed
  slower than 1, i.e., less than the speed of light:
\begin{multline*}
\forall m\in \IOb\enskip\forall p,q\in Q^d \quad |p_\sigma-q_\sigma|<|p_\tau-q_\tau| \\
\then \exists k\in\IOb\quad k\in \ev_m(p)\cap \ev_m(q).
\end{multline*}
\end{description}

\noindent
The following theorem shows that \ax{SlowTime} is logically stronger than \ax{TwP}.

\begin{thm}
\label{thm-slowtime} 
Let $d\ge3$. Then
\begin{align}
\label{eq-slowtime1}
\ax{SpecRel^-_\d}+\ax{AxLinTime}+\ax{SlowTime}&\models \ax{TwP},\text{ but}\\
\label{eq-slowtime2}
\ax{SpecRel^-_\d}+ \ax{AxShift}+ \ax{AxLinTime}+\ax{AxThExp}+\ax{TwP}&\not\models \ax{SlowTime}.
\end{align}
\end{thm}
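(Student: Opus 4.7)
My approach has two parts. For \eqref{eq-slowtime1}, I would argue directly using Lemma~\ref{lem-mtwp} to pass to $b$'s frame. Working in $b$'s coordinates, $Crd_b(e_a)$ and $Crd_b(e_c)$ both have space component $o$ and time components $t_a$ and $t_c$, while $Crd_b(e)$ has time component $t_e$ and nonzero space component; WLOG $t_a < t_e < t_c$ (the reverse ordering is symmetric). Write $Crd_b(e) - Crd_b(e_a) = s_a \, {}^\ddag 1^a_b$ and $Crd_b(e_c) - Crd_b(e) = s_c \, {}^\ddag 1^c_b$; since $({}^\ddag 1^a_b)_\tau \ge 0$ and $t_e > t_a$, we must have $s_a > 0$, and similarly $s_c > 0$. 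By \ax{AxLinTime}, $\time_a(e_a,e) = s_a$, $\time_c(e,e_c) = s_c$, and $\time_b(e_a,e_c) = t_c - t_a$. Because $a$ and $c$ have nonzero relative velocity to $b$ (their worldlines in $b$'s frame are not the $\tau$-axis), \ax{SlowTime} forces $({}^\ddag 1^a_b)_\tau > 1$ and $({}^\ddag 1^c_b)_\tau > 1$; hence $t_e - t_a = s_a({}^\ddag 1^a_b)_\tau > s_a$ and $t_c - t_e > s_c$. Adding gives $\time_b(e_a,e_c) > \time_a(e_a,e) + \time_c(e,e_c)$, which is \ax{TwP}.

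For \eqref{eq-slowtime2}, I would build a model in which the Minkowski sphere is a shifted convex hyperboloid. Fix $\alpha \in \Q^+$ with $\alpha < 1/\sqrt{2}$ and let
\begin{equation*}
S \leteq \Setopen v\in\Q^d \::\: v_\tau>0,\; v_\tau^2-(v_2+\alpha)^2-|v_{\geq 3}|^2 = 1-\alpha^2,\; v_\tau^2 > |v_\sigma|^2 \Setclose.
\end{equation*}
This is the forward-timelike part of the standard unit upper hyperboloid translated by $\alpha$ along the $x_2$-axis. It contains $1_t$, meets every forward timelike ray from $o$ in a unique point, is convex in the sense of Remark~\ref{rem-conv} (being the part of the boundary of the convex region $\{v : v_\tau \ge \sqrt{(1-\alpha^2)+(v_2+\alpha)^2+|v_{\ge 3}|^2}\}$ visible from the external point $o$), and contains the specific point $v_* \leteq \langle\sqrt{1-\alpha^2},-\alpha,0,\ldots,0\rangle$ of $\tau$-coordinate strictly less than $1$. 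Take $\IOb \leteq \Q^d \times S$; for each $v\in S$, fix an invertible linear map $A_v$ of the form (positive dilation)$\circ$(Lorentz) with $A_v(1_t) = v$; and declare the worldview transformation from $k = \langle p, v\rangle$ to the reference observer $m_0 \leteq \langle o, 1_t\rangle$ to be $w^k_{m_0}(x) = A_v(x) + p$. The worldview transformation between any two observers is a composition of Poincar\'e and dilation maps, so \ax{AxPh} holds; \ax{AxSelf}, \ax{AxEv}, \ax{AxLinTime}, \ax{AxShift}, and \ax{AxThExp} are immediate from the construction.

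The crux is that the time-unit vector transforms \emph{linearly}: since $w^k_m = (w^m_{m_0})^{-1}\circ w^k_{m_0}$ is affine with linear part $A_m^{-1}\circ A_k$,
\begin{equation*}
1^k_m = w^k_m(1_t) - w^k_m(o) = (A_m^{-1}\circ A_k)(1_t) = A_m^{-1}(v_k).
\end{equation*}
Hence $MS^\ddag_m = A_m^{-1}(S)$ is the image of the convex set $S$ under the invertible linear map $A_m^{-1}$, so it is again convex (linear maps preserve the $\Bw$-relation); by Corollary~\ref{cor-twp}, \ax{TwP} holds. Meanwhile, the observer $k \leteq \langle o, v_*\rangle$ satisfies $1^k_{m_0} = v_*$ with $|(v_*)_\tau| = \sqrt{1-\alpha^2} < 1$ and $\wl_{m_0}(k)\ne\wl_{m_0}(m_0)$, so \ax{SlowTime} fails. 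The main obstacle is this convexity-for-every-$m$ step: in dimension $\ge 3$, computing $1^k_m$ via pure-boost composition $L_k L_m^{-1}$ would be awkward because of Wigner rotation, but the identity $1^k_m = A_m^{-1}(v_k)$ reduces convexity of $MS^\ddag_m$ to an invariant property of linear maps.
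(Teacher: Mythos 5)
Your proposal is correct and takes essentially the same route as the paper: item \eqref{eq-slowtime1} is exactly the intended unpacking of Lemma~\ref{lem-mtwp} (pass to $b$'s frame, write the coordinate displacements along $\wl_b(a)$ and $\wl_b(c)$ as positive multiples of the time-unit vectors, and use \ax{AxLinTime} together with \ax{SlowTime}), while item \eqref{eq-slowtime2} is the paper's counterexample with its unspecified ``nontrivial convex set containing $1_t$ and a point of time component $<1$'' instantiated concretely as a shifted hyperboloid, and with the identity $1^k_m=A_m^{-1}(v_k)$ making precise why convexity of $MS^\ddag_m$ transports to every observer (it is the linear part, not the full affine map, that matters). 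The only thing to add in a write-up is the body sort and the photons: take, as in the paper, all lines (or at least all lightlike lines) as worldlines of bodies and let $\Ph$ be those of slope $1$, since \ax{AxPh} requires such bodies to exist and does not follow merely from the worldview transformations being Poincar\'e-plus-dilation maps.
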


\begin{proof}[Proof]
Item \eqref{eq-slowtime1} is clear by Lemma~\ref{lem-mtwp}.

To prove Item \eqref{eq-slowtime2}, let us construct a model in which
\ax{SpecRel^-_\d}, \ax{AxShift}, \ax{AxLinTime}, \ax{AxThExp} and
\ax{TwP} hold, but \ax{SlowTime} does not. Let $\langle
\Q;+,\cdot,<\rangle$ be any Euclidean ordered field. Let $B\leteq
\Q^d\times\Q^d$. Let $\IOb\leteq \setopen \langle
p,q\rangle\in\B\setmid
|p_\sigma-q_\sigma|<|p_\tau-q_\tau|\setclose$. It is easy to see that
there is a nontrivial convex subset $M$ of $\Q^d$ such that $1_t\in M$
and $|p_\tau|<1$ for some $p\in M$. Let $MS^\ddag_{\langle
  1,0\rangle}$ be such a convex subset of $\Q^d$. Let
$W(\langle1,0\rangle,\langle p,q\rangle,r)$ hold iff $r$ is in
the line through $p$ and $q$. Now the worldview relation is given for observer
$\langle1,0\rangle$. By Remark \ref{rem-convMS}, $MS^\ddag_{\langle
  1,0\rangle}$ intersects a line at most once. For any other observer
$\langle p,q\rangle$, let $w^{\langle p,q\rangle}_{\langle1,0\rangle}$
be such a composition of a Lorentz transformation, a dilation and a
translation which takes $o$ to $p$ while its linear part takes $1_t$
to the unique element of $MS^\ddag_{\langle 1,0\rangle}\cap\setopen
\lambda(p-q):\lambda\in \Q\setclose$, and leaves the other basis
vectors fixed. It is easy to see that there is such a transformation. From
these worldview transformations, it is easy to define the worldview
relations of the other observers. So the model is
given. It is not difficult to see that \ax{SpecRel^-_\d}, \ax{AxShift},
\ax{AxLinTime} and \ax{AxThExp} are true in this model. Since
$MS^\ddag_{\langle 1,0\rangle}$ is convex and the worldview
transformations are affine ones, it is clear that $MS^\ddag_m$ is
convex for all $m\in \IOb$. Hence \ax{TwP} is also true in this model
by Corollary~\ref{cor-twp}. It is clear that \ax{SlowTime} is not true
in this model since there is a $p\in MS^\ddag_{\langle 1,0\rangle}$
such that $|p_\tau|<1$ (i.e., there is $k\in\IOb$ such that
$|(1^k_{\langle 1,0\rangle})_\tau|<1$); and that completes the proof.
\end{proof}

Like the similar results of \cite{mytdk} and \cite{mythes}, the
following theorem answers Question 4.2.17 of
Andr\'eka--Madar\'asz--N\'emeti \cite{pezsgo}.  It shows that \ax{TwP}
is logically weaker than the symmetry axiom of \ax{SpecRel_\d}.
\begin{thm}
\label{thm-simdist}
Let $d\ge3$. 
Then
\begin{align}
\label{eq-simdist1}
 \ax{SpecRel^-_\d}+\ax{AxSymDist}&\models \ax{TwP}, \text{ but}\\
\label{eq-simdist2}
 \ax{SpecRel^-_\d}+ \ax{AxShift}+\ax{AxLinTime} +\ax{AxThExp}+\ax{TwP}&\not\models\ax{AxSymDist}.
\end{align}
\end{thm}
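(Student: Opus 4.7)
Both items rest on Theorem~\ref{thm-poi} and Corollary~\ref{cor-twp}.

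\textbf{Part \eqref{eq-simdist1}.} Observe that $\ax{SpecRel^-_\d}+\ax{AxSymDist}=\ax{SpecRel_\d}$, so by Theorem~\ref{thm-poi}(2) every worldview transformation $w^k_m$ is a Poincar\'e transformation. From this I first recover \ax{AxLinTime}: by \ax{AxSelf} the world-line $\wl_k(k)$ is the time axis, so its image under the affine map $w^k_m$ is the line $\{w^k_m(o)+\lambda\cdot 1^k_m:\lambda\in\Q\}$; moreover the linear part sends $\lambda\cdot 1_t$ to $\lambda\cdot 1^k_m$, so Euclidean distance along $\wl_m(k)$ scales by $|1^k_m|$, giving exactly the second clause of \ax{AxLinTime}. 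Hence \ax{Kinem_0} holds. Second, since Poincar\'e transformations preserve the Minkowski form, every $1^k_m$ has Minkowski norm~$1$, so $MS^\ddag_m$ is contained in the upper sheet $H^+$ of the standard hyperboloid $p_\tau^2-|p_\sigma|^2=1$. A short computation based on the reverse Minkowski--Schwarz inequality for future-directed unit time-like vectors shows that $H^+$ is convex in the paper's sense, so Corollary~\ref{cor-twp} yields \ax{TwP}.

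\textbf{Part \eqref{eq-simdist2}.} I would reuse the countermodel constructed in the proof of Item~\eqref{eq-slowtime2}, which already verifies $\ax{SpecRel^-_\d}+\ax{AxShift}+\ax{AxLinTime}+\ax{AxThExp}+\ax{TwP}$. It only remains to argue that \ax{AxSymDist} fails in that model. Suppose for contradiction that it held; then all hypotheses of Theorem~\ref{thm-poi}(2) are met, so every $w^k_m$ would be Poincar\'e, forcing each $1^k_m$ to satisfy $(1^k_m)_\tau^2-|(1^k_m)_\sigma|^2=1$ and in particular $|(1^k_m)_\tau|\ge 1$. But the construction chose $MS^\ddag_{\langle1,0\rangle}$ to contain some $p$ with $|p_\tau|<1$ (this was exactly the feature used to refute \ax{SlowTime}), a contradiction.

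\textbf{Main obstacle.} The substantive content lives entirely in Part~\eqref{eq-simdist1}: recovering \ax{AxLinTime} from Poincar\'e structure (short, but needed because \ax{AxLinTime} is deliberately absent from \ax{SpecRel_\d}), and confirming that the unit Minkowski hyperboloid is convex in the paper's specific line-through-origin sense. The latter is the geometric core and reduces to the reverse Cauchy--Schwarz inequality in Minkowski geometry. Part~\eqref{eq-simdist2} is essentially free once the construction from Item~\eqref{eq-slowtime2} is in hand, since that construction was already engineered so that $MS^\ddag_{\langle1,0\rangle}$ is not the Minkowski hyperboloid, whereupon Theorem~\ref{thm-poi}(2) instantly delivers the failure of \ax{AxSymDist}.
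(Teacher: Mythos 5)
Your proof is correct and follows essentially the same route as the paper: Item \eqref{eq-simdist1} via Theorem~\ref{thm-poi}\eqref{item-poi2}, convexity of the upper unit hyperboloid and Corollary~\ref{cor-twp} (your explicit recovery of \ax{AxLinTime} just fills in a step the paper leaves tacit), and Item \eqref{eq-simdist2} via the countermodel built for Item \eqref{eq-slowtime2} of Theorem~\ref{thm-slowtime}. The only cosmetic difference is that the paper dispatches \eqref{eq-simdist2} abstractly, by noting that $\ax{SpecRel^-_\d}+\ax{AxSymDist}$ implies \ax{SlowTime} and citing Theorem~\ref{thm-slowtime}, whereas you re-enter the model and derive the contradiction directly; the underlying fact (Poincar\'e worldview transformations force $|(1^k_m)_\tau|\ge 1$, incompatible with the chosen point of $MS^\ddag_{\langle 1,0\rangle}$ with $|p_\tau|<1$) is the same.
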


\begin{proof}[Proof]
By \eqref{item-poi2} of Theorem~\ref{thm-poi}, \ax{SpecRel^-_\d} and
\ax{AxSymDist} imply that $w^k_m$ is a Poincar\'e transformation for
all $m,k\in\IOb$.  Hence
$  MS^\ddag_m\subseteq\{\, p\in \Q^d\::\: p_\tau^2- {|p_\sigma|}^2 =1 \land p_\tau>0\,\}.$
Consequently, $MS^\ddag_m$ is convex. 
So by Corollary~\ref{cor-twp}, \ax{TwP} follows from \ax{SpecRel^-_\d} and \ax{AxSymDist}.

Since \ax{SpecRel^-_\d} and \ax{AxSymDist} imply \ax{SlowTime} if $d\ge3$, Item \eqref{eq-simdist2} follows from Theorem~\ref{thm-slowtime}.
\end{proof}

It is interesting that \ax{AxSymDist} and \ax{SlowTime} are equivalent
in the models of \ax{SpecRel^-_\d} (and some auxiliary axioms) if the
quantity part is the field of real numbers.  However, that the
quantity part is the field of real numbers cannot be formulated in any
FOL language of spacetime theories.  Consequently, nor can
Theorem~\ref{thm-eqv}, so it cannot be formulated and proved within
our FOL frame either.
\begin{thm}
\label{thm-eqv}
Let $\d\ge3$. Assume \ax{SpecRel^-_\d}, \ax{AxThExp}, \ax{AxLinTime}, \ax{AxShift}, and that $\Q$ is the field of real numbers.
Then
\begin{align}
 \ax{SlowTime}\Iff\ax{AxSymDist}.
\end{align}
\end{thm}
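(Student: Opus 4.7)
The approach is to prove the two implications separately; the direction $\ax{AxSymDist}\Rightarrow\ax{SlowTime}$ is immediate from earlier results, while the other carries the real weight. For $\ax{AxSymDist}\Rightarrow\ax{SlowTime}$ I would invoke Theorem~\ref{thm-poi}(\ref{item-poi2}): together with $\ax{SpecRel^-_\d}$, the axiom $\ax{AxSymDist}$ forces each worldview transformation $w^k_m$ to be a Poincar\'e transformation, so $1^k_m$ is a Minkowski unit vector satisfying $(1^k_m)_\tau^2-|(1^k_m)_\sigma|^2=1$; for any $k$ genuinely moving relative to $m$, this yields $|(1^k_m)_\tau|>1$, which is $\ax{SlowTime}$.

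For the harder direction $\ax{SlowTime}\Rightarrow\ax{AxSymDist}$, the plan is to decompose $w^k_m$ and show both the field-automorphism part and the dilation part are trivial. By Theorem~\ref{thm-poi}(\ref{item-poi1}) write $w^k_m=T\circ D\circ\tilde\varphi$ with $T$ Poincar\'e, $D$ a dilation by some $d\in\Q\setminus\{0\}$, and $\tilde\varphi$ induced by a field automorphism of $\Q$. The first crucial use of $\Q=\mathbb{R}$ enters here: the only field automorphism of $\mathbb{R}$ is the identity, so $\tilde\varphi=\mathrm{id}$ and $w^k_m=T\circ D$. A short calculation shows that for any observer $k'$ moving at speed $u\in(0,1)$ in $m$'s frame one has $(1^{k'}_m)_\tau=d/\sqrt{1-u^2}$, and by $\ax{AxShift}$ together with $\ax{AxThExp}$ such observers exist for every $u$; applying $\ax{SlowTime}$ both from $m$'s frame and symmetrically from $k'$'s frame (where $m$ has the same speed $u$, using that $d_{k'm}=1/d_{mk'}$) yields the bounds $\sqrt{1-u^2}<|d_{mk'}|<1/\sqrt{1-u^2}$.

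These bounds alone do not pin $|d_{mk}|$ to $1$ for a single pair, so the key idea is a chain argument. For fixed $m$, the dilation factor depends only on the velocity vector of the other observer, since any two observers with equal velocity differ by a Poincar\'e transformation and the Poincar\'e--dilation decomposition is unique. Given $k$ with speed $v_0=\tanh(\phi_0)$ in $m$'s frame, $\ax{AxThExp}$ lets me construct a chain $m=k_0,k_1,\dots,k_n$ where each $k_{i+1}$ moves at rapidity $\phi_0/n$ relative to $k_i$ in the same spatial direction, so by rapidity addition $k_n$ has the same velocity as $k$. Dilation factors multiply along the chain, giving $|d_{mk}|=|d_{mk_n}|\in\bigl(\cosh(\phi_0/n)^{-n},\cosh(\phi_0/n)^n\bigr)$, and as $n\to\infty$ in $\mathbb{R}$ both bounds converge to $1$, forcing $|d_{mk}|=1$. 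With $|d|=1$, the map $w^k_m$ preserves the Minkowski bilinear form on displacements; for events $e_1,e_2$ simultaneous in both frames the corresponding displacement vectors are purely spatial in each frame, so their Euclidean lengths coincide, yielding $\dist_m(e_1,e_2)=\dist_k(e_1,e_2)$, which is $\ax{AxSymDist}$.

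The main obstacle is that $\Q=\mathbb{R}$ is used in two essentially non-first-order ways: eliminating nontrivial field automorphisms (the first step above), and performing the squeezing limit as $n\to\infty$ that forces the dilation factor to exactly $1$. Both features fail over typical Euclidean ordered fields, which is precisely why, as the paper notes, the equivalence cannot be formulated within the FOL framework.
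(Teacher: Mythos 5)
The paper itself does not prove Theorem~\ref{thm-eqv}; it only refers to \cite[\S 3]{mythes}, so your argument can only be judged on its own merits rather than compared with an in-paper proof. Your overall strategy is the natural one and is essentially sound: kill the field automorphism using the fact that $\mathbb{R}$ is rigid, extract the two-sided bound $\sqrt{1-u^2}<|d_{mk}|<1/\sqrt{1-u^2}$ from \ax{SlowTime} applied in both frames (using $d_{km}=1/d_{mk}$), and squeeze the dilation factor to $1$ by an Archimedean limiting argument that genuinely needs $\Q=\mathbb{R}$. The direction $\ax{AxSymDist}\Rightarrow\ax{SlowTime}$ via Theorem~\ref{thm-poi}(\ref{item-poi2}) is fine and matches what the paper asserts in the proof of Theorem~\ref{thm-simdist}; note only that you (like the paper) must read the antecedent of \ax{SlowTime} as ``$k$ moves relative to $m$'' rather than literally as $\wl_m(k)\neq\wl_m(m)$, since \ax{AxThExp} provides comoving but spatially displaced observers, for which the literal conclusion $|(1^k_m)_\tau|>1$ fails even under Poincar\'e transformations.

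The one genuine gap is the step ``the dilation factor depends only on the velocity vector of the other observer, since any two observers with equal velocity differ by a Poincar\'e transformation.'' That justification is circular: Theorem~\ref{thm-poi}(\ref{item-poi1}) only gives a Poincar\'e transformation composed with a dilation for \emph{any} pair, and the assertion that two comoving observers are related by a \emph{pure} Poincar\'e map is exactly the statement that their mutual dilation factor is $1$, which is part of what has to be proved (it is false without \ax{SlowTime}: one can consistently add to a standard model an observer whose worldview transformation to a comoving partner is a nontrivial dilation, and all of \ax{SpecRel^-_\d}, \ax{AxLinTime}, \ax{AxShift}, \ax{AxThExp} survive). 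Without this claim your chain only constrains $|d_{mk_n}|$ and says nothing about $|d_{mk}|$; moreover \ax{AxSymDist} must also be verified for comoving pairs, where it genuinely bites since such pairs agree on simultaneity, and your argument never addresses them. Both holes are repairable with the tools you already have: (i) make the chain terminate at $k$ itself by choosing $k_i$ with rapidity $i\phi_0/n$ relative to $m$ in the direction of $k$'s motion, so that the last hop from $k_{n-1}$ to $k$ also has rapidity $\phi_0/n$ and the product bound applies directly to $|d_{mk}|$; (ii) for a comoving pair $k,k'$, use \ax{AxThExp} to pick an observer $h$ moving at a small speed $u$ relative to both, apply \ax{SlowTime} both ways to the pairs $(k,h)$ and $(h,k')$ to get $1-u^2<|d_{kk'}|<1/(1-u^2)$, and let $u$ be arbitrarily small, forcing $|d_{kk'}|=1$. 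With these two repairs (and the final observation that a transformation with $|d|=1$ preserves the Minkowski form, hence spatial distances between doubly simultaneous events) the proof goes through.
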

For proof of Theorem~\ref{thm-eqv}, see \cite[\S 3]{mythes}.  This
theorem is interesting because it shows that assuming only that all
 moving clocks slow down to some degree implies the exact ratio of
the slowing down of moving clocks (since if $\d\ge3$,
$\ax{SpecRel^-_\d}+\ax{AxSymDist}$ implies that the worldview
transformations are Poincar\'e ones, see Theorem~\ref{thm-poi}).
\begin{que}
Does Theorem~\ref{thm-eqv} retain its validity if the assumption that
$\Q$ is the field of real numbers is removed? If not, is it still
possible to replace it by a FOL assumption, e.g., by the axiom schema
of continuity used in \cite{twp}, \cite{logtw}, \cite[\S 7.2]{myphd}?
\end{que}

\section{Concluding remarks}
We have seen that (the inertial version of) TwP can be
characterized geometrically within a general axiom system of
kinematics.  We have also seen some surprising consequences of this
characterization; in particular, that TwP is logically weaker than axiom
\ax{AxSymDist} of special relativity as well as the
assumption of the slowing down of moving clocks.  A future task is to
explore the logical connections between other assumptions and
predictions of relativity theories.  For example, in \cite{twp},
\cite{mythes}, \cite[\S 6]{myphd} \ax{SpecRel_\d} is extended to an
axiom system \ax{AccRel} logically implying the accelerated version of
TwP, but the natural question below, raised by
Theorem~\ref{thm-simdist}, has not been answered yet.
\begin{que} 
Is it possible to weaken \ax{AxSymDist} to \ax{TwP} in \ax{AccRel}
(see, e.g., \cite{myphd}) without losing the accelerated version of
TwP as a consequence? See \cite[Question 3.8]{twp} and \cite[Question
  4.5.6]{myphd}.\end{que}

\section*{ACKNOWLEDGMENTS}
I wish to express my heartfelt thanks to Hajnal Andr\'eka, Judit
X.\ Madar\'asz and Istv\'an N\'emeti for the invaluable inspiration
and guidance I received from them for my work. I am also grateful to
Mike Stannett for his many helpful comments and suggestions.  My
thanks also go to Ram\'on Horv\'ath and Zal\'an Gyenis for our
interesting discussions on the subject.

Research supported by the Hungarian National Foundation for scientific
research grant T73601.

\bigskip\noindent
Alfr\'ed R\'enyi Institute of Mathematics\\
of the Hungarian Academy of Sciences\\
Budapest P.O.Box 127, H-1364, Hungary\\
turms{@}renyi.hu.

\medskip\noindent
Zr{\'i}nyi Mikl{\'o}s University of National Defence\\
Budapest P.O.Box 12, H-1456, Hungary.
\end{document}